\newcommand{\Tr}{{\rm Tr}}
\newcommand{\sq}{\square}
\newcommand{\Sq}{\boxplus}
\newcommand{\A}{\mathcal{A}}
\newcommand{\id}{\mathds{1}}
\newcommand{\ket}[1]{\vert #1 \rangle}
\newcommand{\bra}[1]{\langle #1 \vert}
\def\barray{\begin{eqnarray}}
\def\earray{\end{eqnarray}}
\def\beq{\begin{equation}}
\def\eeq{\end{equation}}
\newtheorem{definition}{Definition}
\newtheorem{proposition}{Proposition}
\tikzset{every path/.style={line cap=round},baseline={([yshift=-.5ex]current bounding box.center)},bevel/.style={preaction={draw,white,line width=2pt,line cap=round}},thick bevel/.style={preaction={draw,white,line width=4pt,line cap=round}},disk/.style={circle,draw=black,fill=black},fdisk/.style={circle,draw=black},star8/.style={draw=black,star,star points=8,fill=black},fstar8/.style={draw=black,star,star points=8},disk normal/.style={disk,inner sep=1.45pt},disk large/.style={disk,inner sep=1.75pt},disk small/.style={disk,inner sep=1pt},disk tiny/.style={disk,inner sep=1.45pt},fdisk normal/.style={fdisk,inner sep=1.45pt},fdisk large/.style={fdisk,inner sep=1.75pt},fdisk small/.style={fdisk,inner sep=1pt},fdisk tiny/.style={fdisk,inner sep=1.45pt},star8 normal/.style={star8,star point height=0.25mm,inner sep=1.1pt},star8 small/.style={star8,star point height=0.2mm,inner sep=0.9pt},star8 large/.style={star8,star point height=0.25mm,inner sep=1.45pt},star8 tiny/.style={star8,star point height=0.2mm,inner sep=0.7pt},fstar8 normal/.style={fstar8,star point height=0.25mm,inner sep=1.1pt},fstar8 small/.style={fstar8,star point height=0.2mm,inner sep=0.9pt},fstar8 large/.style={fstar8,star point height=0.25mm,inner sep=1.45pt},fstar8 tiny/.style={fstar8,star point height=0.2mm,inner sep=0.7pt},-mid/.style={decoration={markings,mark=at position 0.5*\pgfdecoratedpathlength+0.6*3pt with \arrow{>[width=3pt]}},postaction={decorate}},mid-/.style={decoration={markings,mark=at position 0.5*\pgfdecoratedpathlength+0.6*3pt with \arrow{<[width=3pt]}},postaction={decorate}}}
\tikzset{
  tensor/.style={
    inner sep = 0.055cm,
    shape = circle,
    draw,
    fill
  },
  t/.style={
    inner sep = 0.03cm,
    shape = circle,
    draw,
    fill
  } 
    }
\title{Two-Dimensional Bialgebras and Quantum Groups: Algebraic Structures and Tensor Network Realizations}
\author[1]{Jos\'e Garre-Rubio}
\author[2]{Andr\'as Moln\'ar}
\author[1,3]{Germ\'an Sierra}
\affil[1]{\small Instituto de F\'isica Te\'orica, UAM/CSIC, C. Nicol\'as Cabrera 13-15, Cantoblanco, 28049 Madrid, Spain}
\affil[2]{\small University of Vienna, Faculty of Mathematics, Oskar-Morgenstern-Platz 1, 1090 Vienna, Austria}
\affil[3]{\small Kavli Institute for Theoretical Physics, University of California, Santa
Barbara, CA 93106, USA}
\date{}
\begin{document}

\maketitle

\begin{abstract}

We introduce a framework to define coalgebra and bialgebra structures on two-dimensional (2D) square lattices, extending the algebraic theory of Hopf algebras and quantum groups beyond the one-dimensional (1D) setting. Our construction is based on defining 2D coproducts through horizontal and vertical maps that satisfy compatibility and associativity conditions, enabling the consistent growth of vector spaces over lattice sites. We present several examples of 2D bialgebras, including group-like and Lie algebra-inspired constructions and a quasi-1D coproduct instance that is applicable to Taft-Hopf algebras and to quantum groups. The approach is further applied to the quantum group $U_q[su(2)]$, for which we construct 2D generalizations of its generators, analyze $q$-deformed singlet states, and derive a 2D R-matrix satisfying an intertwining relation in the semiclassical limit. Additionally, we show how tensor network states, particularly PEPS, naturally induce 2D coalgebra structures when supplemented with appropriate boundary conditions. Our results establish a local and algebraically consistent method to embed quantum group symmetries into higher-dimensional lattice systems, potentially connecting to the emerging theory of fusion 2-categories and categorical symmetries in quantum many-body physics.

\end{abstract}

\tableofcontents

\section{Introduction}
The theory of coalgebras, bialgebras, and quantum groups emerged as one of the most profound developments in modern algebra, fundamentally transforming our understanding of symmetry and algebraic structures. Quantum groups first appeared in the early 1980s in the work of physicists and mathematicians studying quantum integrable systems, with the quantized enveloping algebra of $\mathfrak{sl}_2$. The formal framework was established around 1985 when Vladimir Drinfeld and Michio Jimbo independently discovered quantized enveloping algebras of Kac-Moody algebras as non-commutative and non-cocommutative Hopf algebras. Drinfeld's seminal address at the International Congress of Mathematicians in 1986 brought quantum groups to worldwide attention and laid the theoretical foundations that continue to drive research today \cite{DrinfeldQG}.

At their core, quantum groups represent a remarkable generalization of classical Lie groups and Lie algebras through deformation theory. While compact groups and semisimple Lie algebras were long considered "rigid" objects that could not be deformed, Drinfeld's insight was that considering equivalent but larger structures - namely group algebras or universal enveloping algebras - allows for meaningful deformation while preserving essential algebraic properties \cite{etingof2024briefintroductionquantumgroups}. These deformations typically depend on a parameter $q$, reducing to classical universal enveloping algebras when $q = 1$. The mathematical foundation rests on Hopf algebra structures, which simultaneously encode both algebra and coalgebra properties, that encode how to multiply elements and define them in more lattice sites respectively, in a compatible manner, which generalize known structures in group theory.

The significance of quantum groups extends far beyond pure algebra, with profound connections to mathematical physics and topology. The quantum group $U_q[su(2)]$ plays a particularly important role. This quantum group emerges naturally as the symmetry algebra of integrable lattice models, including the XXZ spin chain and the Pasquier-Saleur Hamiltonian \cite{PASQUIER1990523}. The XXZ Hamiltonian exhibits quantum group symmetries that are crucial for its exact solvability \cite{roscilde_xxz}.

The representation theory of quantum groups reveals deep connections to conformal field theory and topological quantum field theory. The braiding matrices of Wess-Zumino-Witten models are precisely given by the representation theory of $U_q[su(2)]$. This connection established quantum groups as fundamental objects in the classification of rational conformal field theories and their associated modular tensor categories. On a higher level the category of representations of semisimple weak Hopf algebras are in one-to-one correspondence with multi-fusion categories \cite{bohm1999weak}. 

In recent years, these generalized algebraic structures has been considered as symmetries of one-dimensional quantum many-body systems \cite{Gaiotto_2015, Thorngren19,QuellaQGSPT,Garre22_MPOSYM}. In this context tensor networks (TN) plays an important role. TN representations encode quantum states or operators as contractions of networks of individual tensors, allowing for efficient description of entangled states and relevant operators while preserving important symmetries \cite{orus2013practical}. In Ref.~\cite{molnar22}, two of the authors and their collaborators established a duality between the representation of weak Hopf algebras and matrix product operators (MPO) based on their coalgebra structure. That, together with the MPO characterization of fusion categories of Ref.~\cite{Bultinck17A}, has completed the relation between MPO representations of algebraic and categorical structures in 1D.

The extension of these ideas to higher-dimensional systems presents both conceptual and technical challenges \cite{andruskiewitsch2006examples}. While one-dimensional quantum systems with quantum group symmetries are well understood through the quantum inverse scattering method and Bethe ansatz techniques, the generalization to two-dimensional lattice systems requires fundamentally new approaches. Recent work has begun to explore 2-categories as symmetries on quantum many-body physics \cite{inamura202521dlatticemodelstensor,bhardwaj2024gappedphases21dnoninvertible,bhardwaj2025gappedphases21dnoninvertible,Delcamp_2024,Inamura_2024,garreTN2DSYM}. However, the characterization of this categorical structures in terms of algebras and coalgebras are missing. This suggests that generalizations of quantum group structures to higher-dimensional systems may provide new insights into 2-categorical structures and their applications \cite{nastase_yangbaxter}.

The work presented in this paper represents the first systematic proposal for endowing two-dimensional lattice systems with coalgebra and bialgebra structures in a local and compatible manner. By extending the framework of Hopf algebras to two-dimensional lattices, we provide a new approach to constructing quantum operators and states that respects the geometric structure of the lattice while maintaining algebraic consistency. Our focus on the quantum group $U_q[su(2)]$ connects this work to the rich tradition of integrable quantum systems while opening new avenues for understanding higher-dimensional generalizations. The construction of tensor network representations for these structures provides a concrete computational framework that bridges abstract algebraic concepts with practical applications in quantum many-body theory. By providing explicit realizations of higher-dimensional coalgebraic structures, this work may contribute to the broader program of understanding categorification and its role in modern mathematics and physics.

The structure of the paper is as follows. We begin by reviewing the mathematical definition of Hopf algebras, whose structures we aim to generalize to higher-dimensional settings. In Section \ref{MainSec}, we present our proposal by defining two families of coalgebra maps that 'grow' column and row vectors. We also define the compatibility condition with an algebra structure and propose generalizations to other lattice geometries, including the 2D triangular and 3D square lattices. In Section \ref{SecExamp}, we construct several examples—some of them trivial, in the sense that they coincide with the conventional notions of 2D group-like and Lie-algebra-like objects, and others representing new constructions that can be applied to Taft–Hopf algebras and quantum groups. In Section \ref{PEPS}, we demonstrate that tensor network states in 2D naturally encode our two-dimensional coalgebra maps, and we construct the corresponding tensors for the examples. Finally, in Section \ref{secQG}, we study the case of the quantum group $U_q[su(2)]$ in detail and analyze its R-matrix. 

\section{Hopf Algebra Conditions: A Recap}

A Hopf algebra is an algebraic structure that combines the properties of algebras and coalgebras in a compatible way, supplemented by an additional operation called the antipode. Understanding the complete set of conditions required for a Hopf algebra provides essential background for the constructions developed in this work.

\textbf{Underlying Structure}: A Hopf algebra $H$ is defined over a field,  that we will consider to be just $\mathbb{C}$ for simplicity, and consists of a vector space over that field equipped with five fundamental operations that must satisfy specific compatibility conditions.

\textbf{Algebra Structure}: The algebra structure consists of:
\begin{itemize}
\item Multiplication map $\mu: H \otimes H \to H$, often written as $\mu(a \otimes b) = ab$
\item Unit map $\eta: \mathbb{C} \to H$, with $\eta(1) = 1_H$ denoting the multiplicative identity
\end{itemize}

These must satisfy associativity: $(ab)c = a(bc)$ for all $a,b,c \in H$, and unit properties: $1_H a = a 1_H = a$ for all $a \in H$.

\textbf{Coalgebra Structure}: The coalgebra structure consists of:
\begin{itemize}
\item Comultiplication map $\Delta: H \to H \otimes H$, written in Sweedler notation as $\Delta(h) = \sum_{(h)}h_{(1)} \otimes h_{(2)}$
\item Counit map $\varepsilon: H \to \mathbb{C}$
\end{itemize}

These must satisfy coassociativity: $(\Delta \otimes \text{id}) \circ \Delta = (\text{id} \otimes \Delta) \circ \Delta$, which in Sweedler notation becomes $\sum h_{(1)} \otimes h_{(2)(1)} \otimes h_{(2)(2)} = \sum h_{(1)(1)} \otimes h_{(1)(2)} \otimes h_{(2)}:= \sum h_{(1)} \otimes h_{(2)} \otimes h_{(3)}$. The counit satisfies $(\varepsilon \otimes \text{id}) \circ \Delta = (\text{id} \otimes \varepsilon) \circ \Delta = \text{id}$, or equivalently $\sum \varepsilon(h_{(1)})h_{(2)} = \sum h_{(1)}\varepsilon(h_{(2)}) = h$.

\textbf{Bialgebra Compatibility}: For the algebra and coalgebra structures to be compatible, forming a bialgebra, the following conditions must hold:
\begin{itemize}
\item The comultiplication $\Delta$ must be an algebra homomorphism: $\Delta(ab) = \Delta(a)\Delta(b)$ and $\Delta(1_H) = 1_H \otimes 1_H$
\item The counit $\varepsilon$ must be an algebra homomorphism: $\varepsilon(ab) = \varepsilon(a)\varepsilon(b)$ .
\end{itemize}

We note that the axiom on the counit implies that $\varepsilon(1_H) = 1$. Equivalently, one requires that the multiplication and unit are coalgebra homomorphisms.

\textbf{Antipode Condition}: The defining feature that elevates a bialgebra to a Hopf algebra is the existence of an antipode map $S: H \to H$ satisfying the fundamental antipode condition:
\begin{equation}
\sum S(h_{(1)})h_{(2)} = \sum h_{(1)}S(h_{(2)}) = \varepsilon(h)1_H
\end{equation}
for all $h \in H$.

\textbf{Antipode Properties}: When it exists, the antipode is unique and possesses several important properties:
\begin{itemize}
\item It is an anti-homomorphism of both algebras and coalgebras: $S(ab) = S(b)S(a)$ and $\Delta(S(h)) = \sum S(h_{(2)}) \otimes S(h_{(1)})$
\item For group-like elements $g$ (satisfying $\Delta(g) = g \otimes g$), we have $S(g) = g^{-1}$
\item For primitive elements $x$ (satisfying $\Delta(x) = x \otimes 1 + 1 \otimes x$), we have $S(x) = -x$
\end{itemize}

\textbf{Examples}: Classical examples include group algebras $\mathbb{C}[G]$ where $\Delta(g) = g \otimes g$ and $S(g) = g^{-1}$, and universal enveloping algebras $U(\mathfrak{g})$ where $\Delta(x) = x \otimes 1 + 1 \otimes x$ and $S(x) = -x$ for Lie algebra elements. 

These conditions collectively ensure that the category of $H$-modules forms a rigid monoidal category, providing the categorical foundation for the representation theory and the connections to tensor categories.

\section{Two-dimensional coproducts from row and column maps}\label{MainSec}

The coproduct described in the previous section can be seen as a map from a single point to a line, i.e. going from 0D to 1D. We want to generalize that construction to higher dimensions, in particular define coproducts from 1D to 2D.

We note that the naive approach of combining the same 1D coproduct in the $x$ and $y$ direction to construct a 2D operator is on the one hand, a trivial extension and on the other hand, only compatible for certain coproducts. We can check, see Sec.~\ref{Sec:extriv} for a proof, that this approach only works for (co)commutative coproducts which is only true for a few examples.

Let us introduce now our proposal. Let be $V$ a complex vector space. We first define a family of linear maps that take a column or row vector, of any size $n$, and grows it by doubling it to create a 2D object:
\begin{equation}
     \sq^n: V^{\otimes n}\to  V^{\otimes n}\otimes V^{\otimes n}   
\end{equation}

We will denote by $\sq^n_x$ the map corresponding to the horizontal 'coproduct' that grows column vectors horizontally. Similarly, we define the vertical coproduct as $\sq^n_y$ growing row vector on the vertical direction, that is:

$$ \sq^n_x \begin{pmatrix} *\\ \vdots\\ *\end{pmatrix} \in  \ Span \left \{  \begin{pmatrix} *\\ \vdots\\ *\end{pmatrix} \otimes \begin{pmatrix} *\\ \vdots\\ *\end{pmatrix} \right \} \ ,
\quad 
\sq^n_y(* \ \cdots \ *) \in Span \left\{  \begin{matrix} (* \ \cdots \ *)\\ \otimes \\ (* \ \cdots \ *)\end{matrix} \right \} \ .
$$
We interpret these maps as the way to grow the boundary. We require that these maps are associative in one direction:
\begin{equation}\label{1dquasiasso}
(\sq_i^n\otimes \id_n)\circ \sq_i^n = ( \id_n \otimes \sq_i^n)\circ \sq_i^n \ , \ i\in\{ x,y\}
\end{equation}
we call this equation quasi-1D associativity.

These maps allow us to define in a square lattice of size $n \times m$ a vector $v\in V$, that we denote by $\Sq^{n,m}(v)$. Importantly, the different ways of constructing these two-dimensional lattice vectors for a given system size must be compatible. In particular, the minimum square lattice size, $2\times 2$, gives rise to the $xy$-compatibility condition:
\begin{equation}\label{xycomp}
    \Sq^{2,2} \equiv \sq^2_x \circ \sq^1_y = \sq^2_y \circ \sq^1_x \ . 
\end{equation}
which pictorially can be drawn as

 $$
     \begin{tikzpicture}[baseline=6pt]
    \foreach \x in {0,0.7}{ 
    \foreach \y in {0,0.7}{
    \node[tensor] at (\x,\y){};}}
    \draw[rounded corners, fill=gray, opacity=0.5]  (-0.2,-0.2) rectangle (0.2,0.9);
    \draw[rounded corners, fill=gray, opacity=0.5]  (-0.25,-0.25) rectangle (0.9,0.95);
    \draw[->] (0.25,0.35)--++(0.4,0);
    \draw[ ->] (0.0,0.15)--(0.0,0.55);
    \node[white] at (-0.075,0.35) {\tiny $1$};
    \node[white] at (0.4,0.45) {\tiny $2$};
    \end{tikzpicture}
    =
    \begin{tikzpicture}[baseline=6pt]
    \foreach \x in {0,0.7}{ 
    \foreach \y in {0,0.7}{
    \node[tensor] at (\x,\y){};}}
    \draw[rounded corners, fill=gray, opacity=0.5]  (-0.2,-0.2) rectangle (0.9,0.2);
    \draw[rounded corners, fill=gray, opacity=0.5]  (-0.25,-0.25) rectangle (0.95,0.9);
    \draw[->] (0.175,0)--++(0.4,0);
    \draw[ ->] (0.4,0.25)--++(0.0,0.35);
    \node[white] at (0.3,0.4) {\tiny $2$};
    \node[white] at (0.4,0.1) {\tiny $1$};
    \end{tikzpicture}
    $$
The associativity of $\sq$ allows to grow the vectors $\Sq$ in a compatible way:
\begin{align}
    \Sq^{n+1,m} = & \  \overbrace{\left (\id^{\otimes m} \otimes \cdots \otimes \id^{\otimes m} \otimes \sq^m_x \right )}^ {\# n}\circ \ \Sq^{n,m}   \\
   = & \ \left (\id^{\otimes m} \otimes \cdots \otimes \sq^m_x  \otimes \id^{\otimes m} \right )\circ \ \Sq^{n,m} \quad , \quad m\ge 2 \\
   = & \ \left ( \sq^m_x  \otimes \cdots \otimes \id^{\otimes m}   \otimes \id^{\otimes m} \right )\circ\  \Sq^{n,m} \ .
\end{align}
and also in the $y$-direction. 

The $xy$-compatibility of Eq.~\eqref{xycomp} can be generalized to any system size, that is, going from size $n\times m$ to $(n+1)\times( m+1)$ in two compatible ways:
\begin{equation}\label{Sqcomp}
    \Sq^{n+1,m+1} \equiv( \id \otimes \sq^{m+1}_x) \circ (\id\otimes \sq^n_y ) \circ \Sq^{n,m} = (\id \otimes \sq^{n+1}_y )\circ (\id \otimes \sq^m_x) \circ \Sq^{n,m} \ ,
\end{equation}
where we have used the co-associativity to avoid writing where the identity acts.

We note that any vector can be grown just by using $\sq^1$, for example in a $2\times 2$ as $(\sq^1_y\otimes\sq^1_y)\circ\sq^1_x$. However, we do not impose that to be equal to $\Sq$: we only allow to grow the full vertical and horizontal boundaries which only allows to obtain rectangular shapes.

Similarly to the case of one-dimensional Hopf algebras we can define counits. In this case we define a family of linear functionals:
\begin{equation}
    \epsilon_i^n: V^{\otimes n}\to \mathbb{C} \ , \ n\ge 1 \ , \ i\in \{ x,y\} \ ,
\end{equation}
that satisfies the following compatibility condition with $\sq^n$ (undoing it):
$$ (\epsilon_i^n\otimes \id )\circ \sq_i^n = (\id \otimes \epsilon_i^n )\circ \sq_i^n = \id , \ i\in \{ x,y\} \ ,  $$
which particularizes to the relation between the counit and $\Delta$ for $n=1$.
Let us summarize our proposal in a definition:
\begin{definition}[2D coalgebra] A two-dimensional coalgebra on a square lattice is given by a vector space $V$, where $V_{i,j}$ denotes the vector space associated with the site $(i,j)$, and a family of maps $(\sq^n_i,\epsilon^n_i), \ i\in \{x,y\}$ defined as follows
\begin{align*}
    \sq_x^n: \bigotimes_{j=1}^n V_{i,j}\to  \bigotimes_{j=1}^nV_{i,j}\otimes \bigotimes_{j=1}^nV_{i+1,j} \ , & \quad 
    \epsilon_x^n: \bigotimes_{j=1}^n V_{i,j} \to \mathbb{C} \ , \\
       \sq_y^m: \bigotimes_{i=1}^m V_{i,j}\to  \bigotimes_{i=1}^m V_{i,j}\otimes \bigotimes_{i=1}^m V_{i,j+1} \ , & \quad
    \epsilon_x^m: \bigotimes_{i=1}^m V_{i,j} \to \mathbb{C}  \ ,
\end{align*}
satisfying the following axioms:
\begin{itemize}
    \item Associativity axiom: 
    $$(\sq_i^n\otimes \id_n)\circ \sq_i^n = ( \id_n \otimes \sq_i^n)\circ \sq_i^n \ , \ i\in\{ x,y\} \ .$$
    \item Compatibility axiom: 
    $$\Sq^{n+1,m+1} \equiv( \id \otimes \sq^{m+1}_x) \circ (\id\otimes \sq^n_y ) \circ \Sq^{n,m} = (\id \otimes \sq^{n+1}_y )\circ (\id \otimes \sq^m_x) \circ \Sq^{n,m}\ .$$
    \item Counit axiom: 
    $$(\epsilon_i^n\otimes \id )\circ \sq_i^n = (\id \otimes \epsilon_i^n )\circ \sq_i^n = \id , \ i\in \{ x,y\}\ .$$
\end{itemize}
\end{definition}

\subsection{A compatible algebra structure: 2D bialgebras}

We now study the case when our space is both an algebra and a coalgebra: $V \equiv \A$, such that the
coproducts $\sq_i^n$ are homomorphism:
\begin{equation}
    \sq_{i}^n(a)\cdot \sq_{i}^n(b) = \sq_{i}^n(ab)\ , \ \forall a,b\in \A^{\otimes n} \ , i\in\{x,y\}
\end{equation}
where the multiplication is taken component-wise, i.e. $(a_1\otimes \cdots \otimes a_n) \cdot (b_1\otimes \cdots \otimes b_n) = a_1b_1\otimes \cdots \otimes a_nb_n$. Notice that this implies that the operators defined on the whole two-dimensional square lattice are also homomorphisms:

\begin{equation}
    \Sq^{n,m}(a)\cdot \Sq^{n,m}(b) = \Sq^{n,m}(ab)\ , \ \forall a,b\in \A \ , \ n,m\in \mathbb{N}
\end{equation}

Once we have this algebra structure we can also define a family of antipodes:
\begin{equation}
    S_i^n: \A^{\otimes n}\to \A^{\otimes n} \ , \ n\ge 1\ , i\in\{x,y\} \ ,
\end{equation}
satisfying 
$$  \mu_i (S_i^n\otimes \id)\circ \sq_i^n(a) = \mu_i (\id\otimes S_i^n)\circ \sq_i^n(a) = \epsilon_i^n(a) \ , i\in\{x,y\} \ ,$$
where $\mu_x$ or $\mu_y$ stands for the multiplication of two row or column algebra elements in $\A^{\otimes n}$ respectively. Altogether, the space $\A$, the maps $(\sq_i^n,\epsilon_i^n,S_i^n)$ with $i\in\{x,y\}$ and their compatibility conditions form a two-dimensional Hopf algebra.

\subsection{The dual two-dimensional algebra}

In the one-dimensional case, given a vector space $V$ with a coalgebra structure $(\Delta,\epsilon)$, an algebra can be defined on the space of linear functionals $V^*=\{f:V\to \mathbb{C}, \ f\ {\rm is \ linear} \}$. The product on this algebra is given by the coproduct on $V$: $(f\cdot g) (v) = (f\otimes g)\circ \Delta(v), \ f,g \in V^*$ and the unit is the counit since $(f\cdot \epsilon) (v) = (f\otimes \epsilon)\circ \Delta(v) = f(v), \ f \in V^*$. The co-associativity, $(\id\otimes \Delta)\circ \Delta = (\Delta \otimes \id)\circ \Delta $ guarantees that the product in $V^*$ is associative: $(f\cdot g) \cdot h = f\cdot (g \cdot h)$.

Similarly, we want to define an algebra as the dual of $(V,\sq)$. We first notice that we can extend the product functions evaluated on $V^{\otimes n}$ by
\begin{equation}\label{eq:produal}
    \mu_i(f_n \cdot g_n) (v_n) := (f_n \otimes g_n) \circ \sq_{i}^n(v_n) , \  f_n, g_n\in (V^{\otimes n })^* , v_n \in V^{\otimes n } \ , \in \{x,y\}
\end{equation}
where we can take either the $x$ or $y$ coproduct, which are in general different, by arranging the functions as rows or columns. The quasi-1D coassociativity Eq.~\eqref{1dquasiasso} guarantees the associativity of the different products of these functions.


This allows us to define the action of grid functions $f=\{f_{i,j}\}\in (V^{\otimes n\times m })^* $ on $v\in V $ via the map $\Sq$:
 
$$ f(v) := \bigotimes_{i,j} f_{i,j} \circ \Sq(v) \ .$$

This product of functions is associative in the sense that we can gather either row or column functions and obtain the same result. In particular for a $2\times 2$ square lattice
$$
\begin{pmatrix}f_{2,1}\\f_{1,1}\end{pmatrix}\cdot \begin{pmatrix}f_{2,2}\\ f_{1,2}  \end{pmatrix} (\sq^1_y(v))
=
\begin{matrix} (f_{2,1} , f_{2,2} )\\  \cdot \\(f_{1,1} , f_{1,2} ) \end{matrix} (\sq^1_x(v)) \ ,
$$
since the $xy$-compatibility is satisfied \eqref{xycomp}, $\sq^2_x \circ \sq^1_y = \sq^2_y \circ \sq^1_x $. 
For the sake of completeness let us write the conditions for $3\times 2$ lattice:
$$
\begin{pmatrix}f_{2,1}\\f_{1,1}\end{pmatrix}\cdot \begin{pmatrix}f_{2,2}\\ f_{1,2}  \end{pmatrix} \cdot \begin{pmatrix}f_{2,3}\\ f_{1,3}  \end{pmatrix} (\sq^1_y(v))
=
\begin{matrix} (f_{2,1} , f_{2,2} ,f_{2,3} )\\  \cdot \\(f_{1,1} , f_{1,2}, f_{1,3}) \end{matrix} (\sq^1_x)^2(v) \ .
$$

\subsection{Generalizations to the triangular and the 3D square lattices}

For a 3D square lattice there are three types of coproduct associated with each direction: $x,y,z$. The first compatibility equations comes from the growth only in two-dimensions, corresponding to the faces of a cube:
$$\sq^2_x \circ \sq^1_y = \sq^2_y \circ \sq^1_x \ , \ \sq^2_x \circ \sq^1_z = \sq^2_z \circ \sq^1_x\ , \ \sq^2_y \circ \sq^1_z = \sq^2_z \circ \sq^1_y \ .$$
The proper minimal 3D structure is a cube of size $2\times 2\times 2$ which imposes the $xyz$-compatibility:
$$ \sq^4_z \circ \sq^2_y \circ \sq^1_x =  \sq^4_y \circ \sq^2_z \circ \sq^1_x = \sq^4_x \circ \sq^2_y \circ \sq^1_z\ . $$
It is interesting to note that the non-trivial example provided in Section \ref{sec:pivot} can also be constructed in this example.

For a 2D triangular lattice with generating vectors $a, b$ and $c=a+b$ there are also three types of coproducts related to these directions. The first compatibility conditions are: 
$$\sq^2_a \circ \sq^1_b = \sq^2_b \circ \sq^1_a \ , \quad
\begin{tikzpicture}[baseline=6pt]
\node[tensor](a) at (0,0){};
\node[tensor](b) at (1,0){};
\node[tensor](c) at (60:1){};
\node[tensor](d) at (-60:1){};
\draw[dashed] (a)--(b)--(d);
\draw[->] (60:0.25)--++(60:0.5);
\node[] at (75:0.5) {\small $1$};
\draw[->] (-60:0.25)--++(-60:0.5);
\node[] at (-75:0.5) {\small $2$};
\draw[->] ($(-60:0.25)+(60:1)$)--++(-60:0.5);
\node[] at ($(-75:0.5)+(60:1)$) {\small $2$};
\end{tikzpicture}
=
\begin{tikzpicture}[baseline=6pt]
\node[tensor](a) at (0,0){};
\node[tensor](b) at (1,0){};
\node[tensor](c) at (60:1){};
\node[tensor](d) at (-60:1){};
\draw[dashed] (a)--(b)--(c);
\draw[->] (-60:0.25)--++(-60:0.5);
\node[] at (-75:0.5) {\small $1$};
\draw[->] (60:0.25)--++(60:0.5);
\node[] at (75:0.5) {\small $2$};
\draw[->] ($(60:0.25)+(-60:1)$)--++(60:0.5);
\node[] at ($(75:0.5)+(-60:1)$) {\small $2$};
\end{tikzpicture}
$$

 $$    \sq^2_a \circ \sq^1_c = \sq^2_c \circ \sq^1_a\ , \quad \begin{tikzpicture}[baseline=6pt]
\node[tensor] at (0,0){};
\node[tensor](a) at (1,0){};
\node[tensor](b) at (60:1){};
\draw[dashed] (a)--(b);
\draw[->] (0.2,0.2)--++(60:0.5);
\node[] at (70:0.5) {\small $1$};
\node[tensor](d) at ($(a)+(60:1)$){};
\draw[dashed] (d)--(a);
\draw[->] ($(0.2,0)+(60:1)$)--++(0.6,0);
\draw[->] (0.2,0)--++(0.6,0);
\node[](c) at (0.5,0.2) {\small $2$};
\node[] at ($(c)+(60:1)$) {\small $2$};
    \end{tikzpicture}
    =
\begin{tikzpicture}[baseline=6pt]
\node[tensor] at (0,0){};
\node[tensor](a) at (1,0){};
\node[tensor](b) at (60:1){};
\draw[dashed] (a)--(b);
\draw[->] (0.2,0.2)--++(60:0.5);
\draw[->] ($(0.2,0.2)+(1,0)$)--++(60:0.5);
\node[](c) at (70:0.5) {\small $2$};
\node[tensor](d) at ($(a)+(60:1)$){};
\draw[dashed] (d)--(b);
\draw[->] (0.2,0)--++(0.6,0);
\node[] at (0.5,0.2) {\small $1$};
\node[] at ($(c)+(1,0)$) {\small $2$};
\end{tikzpicture}
 $$

 $$ \sq^2_b \circ \sq^1_c = \sq^2_c \circ \sq^1_b \ , \quad    
 \begin{tikzpicture}[baseline=-6pt]
\node[tensor] at (0,0){};
\node[tensor](a) at (1,0){};
\node[tensor](b) at (-60:1){};
\draw[dashed] (a)--(b);
\draw[->] (0.2,-0.2)--++(-60:0.5);
\node[] at (-70:0.5) {\small $1$};
\node[tensor](d) at ($(a)+(-60:1)$){};
\draw[dashed] (d)--(a);
\draw[->] ($(0.2,0)+(-60:1)$)--++(0.6,0);
\draw[->] (0.2,0)--++(0.6,0);
\node[](c) at (0.5,0.2) {\small $2$};
\node[] at ($(c)+(-60:1)$) {\small $2$};
\end{tikzpicture}
    =
\begin{tikzpicture}[baseline=-6pt]
\node[tensor] at (0,0){};
\node[tensor](a) at (1,0){};
\node[tensor](b) at (-60:1){};
\draw[dashed] (a)--(b);
\draw[->] (0.2,-0.2)--++(-60:0.5);
\draw[->] ($(0.2,-0.2)+(1,0)$)--++(-60:0.5);
\node[](c) at (-70:0.5) {\small $2$};
\node[tensor](d) at ($(a)+(-60:1)$){};
\draw[dashed] (d)--(b);
\draw[->] (0.2,0)--++(0.6,0);
\node[] at (0.5,0.2) {\small $1$};
\node[] at ($(c)+(1,0)$) {\small $2$};
\end{tikzpicture}
    $$
The three growing directions have to be compatible to create the same vector for a full hexagon which results in the following compatible equations:

$$ \sq^3_c \circ \sq^2_a \circ \sq^1_b =  \sq^3_b \circ \sq^2_c \circ \sq^1_a = \sq^3_a \circ \sq^2_b \circ \sq^1_c \ ,$$
and diagrammatically can be expressed as:
$$
\begin{tikzpicture}[baseline=6pt]
\node[tensor](a) at (0,0){};
\node[tensor](b) at (1,0){};
\node[tensor](c) at (60:1){};
\node[tensor](d) at (-60:1){};
\draw[] (a)--(c)--(b)--(d)--(a);
\draw[dashed] (a)--(b);
\node[tensor](e) at ($(b)+(1,0)$){};
\draw[->,thick] (b)--(e);
\node[tensor](f) at ($(c)+(1,0)$){};
\draw[->,thick] (c)--(f);
\node[tensor](g) at ($(d)+(1,0)$){};
\draw[->,thick] (d)--(g);
\draw[dashed] (b)--(f)--(e)--(g)--(b);
\end{tikzpicture}
=
\begin{tikzpicture}[baseline=6pt]
\node[tensor](a) at (0,0){};
\node[tensor](b) at (1,0){};
\node[tensor](c) at (60:1){};
\node[tensor](d) at ($(b)+(60:1)$){};
\draw[] (a)--(b)--(d)--(c)--(a);
\draw[dashed] (c)--(b);
\node[tensor](e) at ($(a)+(-60:1)$){};
\draw[->,thick] (a)--(e);
\node[tensor](f) at ($(b)+(-60:1)$){};
\draw[->,thick] (b)--(f);
\node[tensor](g) at ($(d)+(-60:1)$){};
\draw[->,thick] (d)--(g);
\draw[dashed] (e)--(f)--(g);
\draw[dashed] (e)--(b)--(g);
\end{tikzpicture}
=
\begin{tikzpicture}[baseline=6pt]
\node[tensor](a) at (0,0){};
\node[tensor](b) at (1,0){};
\node[tensor](c) at (-60:1){};
\node[tensor](d) at ($(b)+(-60:1)$){};
\draw[] (a)--(b)--(d)--(c)--(a);
\draw[dashed] (c)--(b);
\node[tensor](e) at ($(a)+(60:1)$){};
\draw[->,thick] (a)--(e);
\node[tensor](f) at ($(b)+(60:1)$){};
\draw[->,thick] (b)--(f);
\node[tensor](g) at ($(d)+(60:1)$){};
\draw[->,thick] (d)--(g);
\draw[dashed] (e)--(f)--(g);
\draw[dashed] (e)--(b)--(g);
\end{tikzpicture}
\ .$$

We notice that a similar treatment can be used to construct coalgebras to grow the dice lattice (rhombille tiling). The triangular and dice lattice are the duals of the hexagonal and kagome lattice respectively. We leave for future work the development of coalgebras maps for these lattices which, regarding their dual covered by our approach, seems to be more appropriate to take plaquette-based maps instead of vertex one.

\section{Examples} \label{SecExamp}

In this section we propose several examples that satisfy our definition of 2D coproduct. We start by showing that using just the concatenation of 1D coproducts we obtain trivial 2D vectors in the sense that they are just come from the power of the 1d coproduct rearrange in a 2D lattice. Then, we construct a 2D coproduct based on 1D coproducts but not completely reducible to those and a proper 2D example without the connection to a 1D coproduct.  

\subsection{Trivial  2D coproduct as tensor product of two 1D coproducts}\label{Sec:extriv}

Let us start showing the following result:

\begin{proposition}

Let us define $\sq_i^n = \Delta_i\otimes \cdots \otimes \Delta_i $ in both the $x$ and $y$ axis with $\Delta_x:V_{i,j}\to V_{i,j}\otimes V_{i+1,j} $ and $\Delta_y:V_{i,j}\to V_{i,j}\otimes V_{i,j+1} $ that could be possibly different but defined in the same vector space $V$. The $xy$-compatibility condition, Eq.~\eqref{xycomp}, is satisfied if the following is true:
\begin{equation}\label{xy2d1d}
(\Delta_y\otimes \Delta_y)\circ \Delta_x = (\Delta_x\otimes \Delta_x)\circ \Delta_y \ .
\end{equation}
Then, the coproduct and counit are the same: $$\Delta_y = \Delta_x\equiv \Delta, \ \epsilon_x = \epsilon_y \ ,$$ 
and the coproduct is co-commutative:
$$\Delta(v) = \sum v_{(1)}\otimes v_{(2)} =  \sum v_{(2)}\otimes v_{(1)} \equiv \Delta^{op}(v) \ , \forall v\in V \ .$$
\end{proposition}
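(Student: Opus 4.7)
The key algebraic tool is the pair of ``mixed counit'' identities $(\epsilon_x\otimes\epsilon_y)\circ\Delta_x=\epsilon_y$ and $(\epsilon_x\otimes\epsilon_y)\circ\Delta_y=\epsilon_x$, both of which follow immediately from the standard 1D counit axioms by absorbing the inner counit. I would then attack Eq.~\eqref{xy2d1d} by pairing both sides against well-chosen combinations of these counits.

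To establish $\Delta_x=\Delta_y$, apply $\epsilon_x\otimes\epsilon_y\otimes\id\otimes\id$ to Eq.~\eqref{xy2d1d}. On the LHS the first two factors arise from $\Delta_y(v_{(1)x})$, and the mixed counit collapses them to the scalar $\epsilon_x(v_{(1)x})$; the remaining outer counit axiom for $\Delta_x$ then produces $\Delta_y(v)$. The mirror computation on the RHS, using the other mixed identity, gives $\Delta_x(v)$. Comparing, $\Delta_x=\Delta_y\equiv\Delta$. Since $\epsilon_x$ and $\epsilon_y$ are now both counits for the single coalgebra $(V,\Delta)$, uniqueness of the counit forces $\epsilon_x=\epsilon_y\equiv\epsilon$.

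For cocommutativity, the crucial observation is geometric: the two sides of $\sq^2_x\circ\sq^1_y=\sq^2_y\circ\sq^1_x$ list their four output tensor factors in different orders --- ``row by row'' for $(\Delta_x\otimes\Delta_x)\circ\Delta_y$ and ``column by column'' for $(\Delta_y\otimes\Delta_y)\circ\Delta_x$. Identifying the two compositions \emph{site by site} therefore inserts a transposition $\tau_{23}$ of the middle two factors relative to Eq.~\eqref{xy2d1d}. With $\Delta_x=\Delta_y=\Delta$ already in hand, this site-matched compatibility reads $(\Delta\otimes\Delta)\circ\Delta=\tau_{23}\circ(\Delta\otimes\Delta)\circ\Delta$. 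Applying $\epsilon\otimes\id\otimes\id\otimes\epsilon$, the LHS collapses to $\Delta(v)$ via the counit axiom, while the $\tau_{23}$ swap on the RHS exchanges the surviving inner Sweedler indices to yield $\Delta^{op}(v)=\sum v_{(2)}\otimes v_{(1)}$. Hence $\Delta=\Delta^{op}$.

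\textbf{Main obstacle.} The delicate step is the cocommutativity one. Eq.~\eqref{xy2d1d} taken purely as a tensor identity already yields $\Delta_x=\Delta_y$ and $\epsilon_x=\epsilon_y$, after which it reduces to a tautology. What forces $\Delta=\Delta^{op}$ is the site-matched reading of Eq.~\eqref{xycomp}, which differs from Eq.~\eqref{xy2d1d} by precisely the $\tau_{23}$ swap. Keeping careful track of this swap --- and then noticing that a one-line counit projection of the resulting identity is exactly the statement of cocommutativity --- is the only non-routine step.
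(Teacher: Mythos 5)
Your proof is correct and rests on the same core technique as the paper's --- projecting the four-fold coproduct identity with tensor products of counits and identities --- but it differs in ordering and, more importantly, in one point of bookkeeping where your version is actually the more careful one. The paper first extracts $\epsilon_x=\epsilon_y$ via the intermediate identity $\Delta_y=\left(\id\otimes(\epsilon_y\otimes\id)\circ\Delta_x\right)\circ\Delta_y$ and only afterwards concludes $\Delta_x=\Delta_y$; you obtain $\Delta_x=\Delta_y$ from a single projection and then recover $\epsilon_x=\epsilon_y$ from uniqueness of the counit, which is marginally shorter. The more substantive difference concerns cocommutativity: the paper applies $\epsilon\otimes\id\otimes\id\otimes\epsilon$ to Eq.~\eqref{xy2d1d} \emph{as literally written} and asserts $\Delta=\Delta^{op}$, but, as you correctly diagnose, the flat tensor identity collapses to a tautology once $\Delta_x=\Delta_y$, and that final projection then just returns $\Delta_x=\Delta_y$ again. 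The $\tau_{23}$ reordering that matches tensor factors to lattice sites --- visible in the pictorial form of Eq.~\eqref{xycomp} but suppressed in the paper's algebraic manipulations --- is exactly what makes the last step produce $\Delta^{op}$, and your explicit tracking of it is what makes the cocommutativity conclusion actually follow. One small caveat: your first step uses the unpermuted reading of Eq.~\eqref{xy2d1d} while your last step uses the site-matched one; since the site-matched version alone already yields $\Delta_x=\Delta_y$ (this is precisely what the paper's intermediate identity accomplishes), you could run the entire argument from a single consistent reading of the hypothesis, which would make the proof fully airtight.
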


\begin{proof}
By applying $\epsilon_x\otimes \epsilon_y\otimes \id\otimes \id$ to Eq.~\eqref{xy2d1d} we get
$$ \Delta_y= \left(\id\otimes (\epsilon_y\otimes \id)\circ\Delta_x \right)\Delta_y \ .$$
If we further apply $\epsilon_y \otimes \id$ we obtain that $\id= (\epsilon_y\otimes \id)\circ \Delta_x$ which implies that
$$\epsilon_x=\epsilon_y\equiv \epsilon \ .$$
Let us now apply $\id\otimes \epsilon\otimes \epsilon\otimes \id$ to Eq.~\eqref{xy2d1d}, then we get 
$$\Delta_x = \Delta_y\equiv \Delta \ .$$
Similarly, applying $ \epsilon\otimes \id\otimes \id \otimes\epsilon$ we get that
$$ \Delta = \Delta^{op} \ .$$
\end{proof}

This implies that, by using the equivalence between MPS and coalgebras in 1D \cite{molnar22}, the resulting vector in the 2D lattice is a permutation invariant MPS, i.e., a trivial PEPS. The last statement can be seen by rewriting: 
$$\Sq^{2, 2}=(\Delta_x\otimes\id \otimes \id)\circ(\id\otimes \Delta_x)\circ \Delta_y= \Delta^3$$

There are two examples appearing naturally in physics where the coproduct is this trivial one:
\begin{itemize}
    \item Let us denote by $\Delta_g$ the 1D coproduct that acts as $\Delta_g(v) = v\otimes v$ for the basis elements $v\in V$ and extended by linearity. Taking $\sq^n_y = \sq^n_x = \Delta_g\otimes \cdots \otimes \Delta_g $ it is easy to see that it is a 'proper' 2D coproduct in the sense of the previous section and $\Sq^{n,m}(v)= v^{\otimes nm} $. This is nothing else but the group-like coproduct structure used in group representation theory, for example in on-site global symmetries like rotations $\bigotimes e^{i \theta\cdot \mathbf{J} }$.
    
    \item Let us denote by $\Delta_e$ the 1D coproduct that acts as $\Delta_e(a) = 1\otimes a+ a\otimes 1$ for $1,a\in \A$ where $1$ is the unit of the algebra. Taking $\sq^n_y = \sq^n_x = \Delta_e\otimes \cdots \otimes \Delta_e $ the resulting operators are $\Sq^{n,m}(a)= \sum_{i,j}a_{i,j}\otimes 1^{\otimes (nm-1)} $. This corresponds to the coproduct structure of Lie algebras like $su(2): S^z= \sum_i S^z_i$
\end{itemize}

\subsection{2D coproducts based quasi 1D coproducts}

Let us consider two 1D coproducts $\Delta_x$ and $\Delta_x$ defined in the same vector space $V$. We define the 2D coproducts as follows:
\begin{itemize}
    \item $ \sq^n_y=\Delta_y\otimes \cdots \otimes \Delta_y $ which is defined in the whole space $V^{\otimes n}$.
    \item $\sq^n_x$ is only defined in ${\rm Im}(\Delta^{n-1}_y)$ and its action reads
    $$ \sq^n_x(\Delta^{n-1}_y(v))= (\Delta^{n-1}_y\otimes \Delta^{n-1}_y) \Delta_x(v) .$$
\end{itemize}
With these definitions is clear that Eq.~\eqref{Sqcomp} is satisfied and that the 2D operators are:
$$ \Sq^{n,m}(v) = (\underbrace{\Delta^{n-1}_y\otimes \cdots \otimes \Delta^{n-1}_y}_{\# m}) \circ \Delta^{m-1}_x \ . $$
The counit in the $x$-direction is simply $\epsilon^n_y = \epsilon^{\otimes n}_y$ and $\epsilon^n_x$ is also only defined in ${\rm Im}(\Delta^{n-1}_y)$ as $\epsilon^n_x(\Delta^{n-1}_y(v))=\epsilon_x(v)$. The role of $x$ and $y$ can be interchange in the construction.

In what follows, we show two examples of this construction.
\begin{itemize}
    \item  {\bf 2D Group-like coproduct in one direction.} We will take as an input two 1D coproducts: $\Delta_g$ defined above and an arbitrary coproduct in 1D such that $\Delta(v) = \sum_{i} v^i_{(1)}\otimes v^i_{(2)}$, where we write explicitly the index of the sum $i$. Then we define $\sq^n_y = \Delta_g\otimes \cdots \otimes \Delta_g $ and 
$$\sq^n_x \begin{pmatrix} v\\ \vdots \\v\end{pmatrix}  = \sum_{i} \begin{pmatrix} v^i_{(1)}\\\vdots \\ v^i_{(1)} \end{pmatrix} \otimes \begin{pmatrix} v^i_{(2)}\\\vdots \\ v^i_{(2)}\end{pmatrix} \ ,$$
where $\sq^1_x = \Delta$ but $\sq^n_x \neq \Delta\otimes \cdots \otimes \Delta$. Then, the $xy$-compatibility condition is satisfied and reads;
$$ \sq^2_x\circ \sq^1_y (v) = \sq^2_x \begin{pmatrix} v\\ v\end{pmatrix} = \sum_i \begin{pmatrix} v^i_{(1)} \\ v^i_{(1)} \end{pmatrix} \otimes \begin{pmatrix} v^i_{(2)} \\ v^i_{(2)}\end{pmatrix} = \sq^2_y (\Delta(v)) = \sq^2_y \circ \sq^1_x(v) \ .$$
So that in arbitrary sizes we get:
$$ \Sq^{n,m}(v) = \sum_i \begin{pmatrix} v^i_{(1)}\\\vdots \\ v^i_{(1)} \end{pmatrix} \otimes \begin{pmatrix} v^i_{(2)}\\\vdots \\ v^i_{(2)}\end{pmatrix}\otimes \cdots \otimes \begin{pmatrix} v^i_{(n)}\\\vdots \\ v^i_{(n)}\end{pmatrix}$$

Similarly, the role of the $x$ and $y$ directions can be interchanged.

\item {\bf 2D Lie algebra-like coproduct in one direction}

We denote by $\Delta_e$ the 1D coproduct of a bialgebra $\mathcal{A}$ that acts as $\Delta_e(v) = 1\otimes v+ v\otimes 1$ for all $v\in V\equiv \mathcal{A}$. Similarly, we define $\sq^n_e (v_n) = 1_n\otimes v_n+ v_n\otimes 1_n$, $1_n,v_n \in \A^{\otimes n}$ where we can see that $\sq^n_e \neq \Delta_e\otimes \cdots \otimes \Delta_e$.
Let's take $\sq^n_e = \sq^n_y$ and $\sq^n_x = \Delta\otimes \cdots \otimes \Delta $ where $\Delta$ is an arbitrary coproduct in 1D with $\Delta(1)= 1\otimes 1$. 
Then, the $xy$-compatibility condition reads;
$$ \sq^2_x\circ \sq^1_y (v) = \sq^2_x\left( \begin{pmatrix} 1\\ v\end{pmatrix} +\begin{pmatrix}   v\\ 1\end{pmatrix}\right ) = \begin{pmatrix} 1_2\\ \Delta(v)\end{pmatrix} +\begin{pmatrix} \Delta(v) \\ 1_2 \end{pmatrix} = \sq^2_y (\Delta(v)) = \sq^2_y\circ \sq^1_x (v)$$

Then
$$ \Sq^{n,m}(v) = \begin{pmatrix} \Delta^n(v) \\ 1 \\ \vdots \\ 1 \end{pmatrix} +
\begin{pmatrix} 1   \\ \Delta^n(v) \\ \vdots \\ 1 \end{pmatrix} + \cdots 
\begin{pmatrix} 1   \\ \vdots \\ 1 \\ \Delta^n(v) \end{pmatrix}
$$
Similarly, the role of the $x$ and $y$ directions can be interchanged.

\end{itemize}

\subsection{Cross-like 2D algebra}
Let us consider an algebra $\A$ that is spanned by 6 vectors $1,t,b,r,l,v$. The goal is to define a 2D coproduct in such a way we arrive at the following 2D operator from $v$:
$$ \Sq^{n,m}(v)  = \sum_{\substack{1\le i \le n \\ 1\le j \le m}} 
\begin{tikzpicture}[baseline=(current bounding box.center)]
\node (mat) at (0,0) {
  $    \begin{pmatrix}
           & & & t & & &  \\
           &1 & & \vdots & & 1 & \\
           & & & t & & & \\
           l& \cdots & l& v &r &\cdots &r \\
           & & & b & & & \\
           &1 & & \vdots & & 1& \\
           & & & b & & &
    \end{pmatrix}$
};
\draw[<-, thick] (2.3,0) --++ (0.4,0) node[right] {$j$};
\draw[<-, thick] (0,1.8) --++ (0,0.3) node[above] {$i$};
\end{tikzpicture}
$$

The following compatible actions of the coproducts, defined only on the vectors below, do the job:

\begin{align*}  
    \sq^n_x \begin{pmatrix}
t \\ \vdots \\ t \\ v \\ b \\ \vdots \\ b
\end{pmatrix}
\hspace{-15pt}
\begin{array}{l}
\left.
\begin{array}{l}
\ \\ \ \\ \ 
\end{array}
\right\}
\# p
\\
\ \\ \ \\
\left.
\begin{array}{c}
\ \\ \ \\ \ 
\end{array}
\right\}
\# n-p-1
\end{array}
         = &
         \begin{pmatrix}
           t \\
           \vdots \\
           t \\
           v \\
           b \\
           \vdots \\
           b
         \end{pmatrix}
         \otimes
         \begin{pmatrix}
           1 \\
           \vdots \\
           1 \\
           r \\
           1 \\
           \vdots \\
           1
         \end{pmatrix}
         +
        \begin{pmatrix}
           1 \\
           \vdots \\
           1 \\
           l \\
           1 \\
           \vdots \\
           1
         \end{pmatrix}
         \otimes
         \begin{pmatrix}
           t \\
           \vdots \\
           t \\
           v \\
           b \\
           \vdots \\
           b
         \end{pmatrix} \ , \
         n\ge 1 \ , \ 0\le p<n \ , \\
\sq^n_x     \begin{pmatrix}
           1 \\
           \vdots \\
           1 \\
           r/l \\
           1 \\
           \vdots \\
           1
         \end{pmatrix}
         = &
        \begin{pmatrix}
           1 \\
           \vdots \\
           1 \\
           r/l \\
           1 \\
           \vdots \\
           1
         \end{pmatrix}
         \otimes
         \begin{pmatrix}
           1 \\
           \vdots \\
           1 \\
           r/l \\
           1 \\
           \vdots \\
           1
         \end{pmatrix} \ , 
         \ n\ge 2 \ , \\
    \sq^n_y ( l \ \cdots \ l \ v \ r \ \cdots \ r ) =& \begin{matrix} ( 1 \ \cdots \ 1 \ t \ 1 \ \cdots \ 1 ) \\ \otimes\\ ( l \ \cdots \ l \ v \ r \ \cdots \ r )
    \end{matrix} +
    \begin{matrix} ( l \ \cdots \ l \ v \ r \ \cdots \ r ) \\ \otimes \\( 1 \ \cdots \ 1 \ b \ 1 \ \cdots \ 1 )
    \end{matrix} \ ,\\
        \sq^n_y ( 1 \ \cdots \ 1 \ t/b \ 1 \ \cdots \ 1 ) =& \begin{matrix} ( 1 \ \cdots \ 1 \ t/b \ 1 \ \cdots \ 1 ) \\ \otimes\\ ( 1 \ \cdots \ 1 \ t/b \ 1 \ \cdots \ 1 )
    \end{matrix} \ .
\end{align*}
The counits are defined on the image of the coproducts as follows:
\begin{align*}
\epsilon^n_x(t^{\otimes p}\otimes v\otimes b^{\otimes {n-p-1}}) & = \epsilon^n_y(l^{\otimes p}\otimes v\otimes r^{\otimes {n-p-1}}) = 0 \ , n\ge 1 \ , \ 0\le p<n \ ,\\
\epsilon^n_i(1^{\otimes p}\otimes g\otimes 1^{\otimes {n-p-1}}) & = 1\ ,  \ g\in \{t,b,l,r\} \ , i\in\{x,y\} \ .\\
\end{align*}
An option to define the action of this coproduct in the whole vector space $\A$ is to consider as basis $1,t,b,r,l,v$ and then define the 2d coproduct on $1,t,b,r,l$ to behave as group-like elements. We notice that the application of this example to a quantum group structure does not work since the coproduct does not behave as homomorphism of the algebra. The following example does provide a proper 2D bialgebra for quantum groups.

\subsection{The main example: a non-local 1D coproduct}\label{sec:pivot}

Let us consider a 1D bialgebra with elements $a,b,v$ such that
\begin{equation} \label{coalprim} 
 \Delta(v)= a\otimes v + v\otimes b , \ \Delta(a)= a\otimes a , \ \Delta(b)= b\otimes b 
 \end{equation}
so that
$$ \Delta^n(v)= \sum a^{\otimes k} \otimes v \otimes b^{\otimes n-k} \ .$$
Then we can generalize this structure in a two-dimensional square lattice as follows:

\begin{equation}\label{defpiv}
     \Sq^{n,m}(v)  = \sum_{\substack{1\le i \le n \\ 1\le j \le m}} 
\begin{tikzpicture}[baseline=(current bounding box.center)]
     \node (mat) at (0,0) { $
    \begin{pmatrix}
           & & & b & & &  \\
           &b & & \vdots & & b & \\
           & & & b & & & \\
           a & \cdots & a& v &b &\cdots &b \\
           & & & a & & & \\
           &a & & \vdots & & a& \\
           & & & a & & &
    \end{pmatrix}$
    };
\draw[<-, thick] (2.3,0) --++ (0.4,0) node[right] {$j$};
\draw[<-, thick] (0,1.8) --++ (0,0.3) node[above] {$i$};
\end{tikzpicture}
\ ,
    \Sq^{n,m}(a/b) = (a/b)^{\otimes nm}
\end{equation}
To do so we define the coproducts in the following way: 

\begin{align*}  
    \sq^n_x \begin{pmatrix}
b \\ \vdots \\ b \\ v \\ a \\ \vdots \\ a
\end{pmatrix}
\hspace{-15pt}
\begin{array}{l}
\left.
\begin{array}{l}
\ \\ \ \\ \ 
\end{array}
\right\}
\# p
\\
\ \\ \ \\
\left.
\begin{array}{c}
\ \\ \ \\ \ 
\end{array}
\right\}
\# n-p-1
\end{array}
         = &
         \begin{pmatrix}
           b \\
           \vdots \\
           b \\
           v \\
           a \\
           \vdots \\
           a
         \end{pmatrix}
         \otimes
         \begin{pmatrix}
           b \\
           \vdots \\
           b \\
           b \\
           a \\
           \vdots \\
           a
         \end{pmatrix}
         +
        \begin{pmatrix}
           b \\
           \vdots \\
           b \\
           a \\
           a \\
           \vdots \\
           a
         \end{pmatrix}
         \otimes
         \begin{pmatrix}
           b \\
           \vdots \\
           b \\
           v \\
           a \\
           \vdots \\
           a
         \end{pmatrix} \ , \
         n\ge 1 \ , \ 0\le p<n \\
\sq^n_x     \begin{pmatrix}
           b \\
           \vdots \\
           b \\
           b/a \\
           a \\
           \vdots \\
           a
         \end{pmatrix}
         = &
        \begin{pmatrix}
           b \\
           \vdots \\
           b \\
           b/a \\
           a \\
           \vdots \\
           a
         \end{pmatrix}
         \otimes
         \begin{pmatrix}
           b \\
           \vdots \\
           b \\
           b/a \\
           a \\
           \vdots \\
           a
         \end{pmatrix} \ , 
         \ n\ge 1 \\
    \sq^n_y ( a \ \cdots \ a \ v \ b \ \cdots \ b ) =& \begin{matrix} ( b \ \cdots \ b \ b \ b \ \cdots \ b ) \\ \otimes\\ ( a \ \cdots \ a \ v \ b \ \cdots \ b )
    \end{matrix} +
    \begin{matrix} ( a \ \cdots \ a \ v \ b \ \cdots \ b ) \\ \otimes \\( a \ \cdots \ a \ a \ a \ \cdots \ a )
    \end{matrix} \\
        \sq^n_y ( a/b \ \cdots  \ a/b  ) =& \begin{matrix} ( a/b \ \cdots  \ a/b  ) \\ \otimes\\ ( a/b \ \cdots  \ a/b  )
    \end{matrix} 
\end{align*}
We note that $\sq^n_{x}=\Delta\otimes\cdots \otimes \Delta $ for any $n$, but in the $y$ direction only $\sq^1_{y}=\Delta$. Moreover we notice that if we order the elements of the 2D lattice from left to right and jumping on the last row element to the upper next row we can see that the operator is a rearrangement of the 1D coproduct in this 2D lattice. For example in a $3\times 3$ square:
\begin{equation}\label{horiorder}  \begin{pmatrix}7& 8 &9\\4 & 5&6\\1& 2&3  \end{pmatrix} \  \rightarrow \ 
\begin{tikzpicture}[scale=1]
    \draw[very thick] (-1.411, 2.258) -- (0.282, 2.258) -- (-1.411, 2.822) -- (0.282, 2.822) -- (-1.411, 3.387) -- (0.282, 3.387);
    \node[disk normal] at (-1.411, 2.258) {};
    \node[disk normal] at (-0.564, 2.258) {};
    \node[disk normal] at (0.282, 2.258) {};
    \node[disk normal] at (-1.411, 2.822) {};
    \node[disk normal] at (-0.564, 2.822) {};
    \node[disk normal] at (0.282, 2.822) {};
    \node[disk normal] at (0.282, 3.387) {};
    \node[disk normal] at (-0.564, 3.387) {};
    \node[disk normal] at (-1.411, 3.387) {};
\end{tikzpicture} \ .
\end{equation}
Then the coalgebra has a compatible 2D algebra structure inherited by the 1d bialgebra. The identification is as follows:
$$  \Sq^{L_x , L_y}_{i,j} = \Delta_{L_x(j-1)+i}\ , \ i=1,\dots, L_x \ , j=1,\dots, L_y \ .$$
We emphasize that this arrangement is not local in the sense that there is a jump of length the horizontal system size to go to the next row.

There are two important examples of bialgebras that fits in this setting. One is the family of Taft Hopf algebra and the other are quantum groups. We deal here with the former and in section \ref{secQG} with the later in detailed.

\subsubsection{2D Taft-Hopf algebra}
Let $\omega\in \mathbb{C}$ be an $n$-th root of unity: $\omega^n =1$ where $n\in\mathbb{N}$. The \emph{Taft-Hopf algebra} \cite{TAFT} is the finite-dimensional non-semisimple Hopf algebra generated as an algebra by two elements, $g$ and $x$, subject to the relations
    \[ g^n  = 1, \quad x^n = 0, \quad xg = \omega gx, \]
    for which the coalgebra structure is defined by the extensions of the linear maps:
    \[
    \Delta(g) = g\otimes g,\quad \Delta(x) = 1 \otimes x + x \otimes g, \quad \varepsilon(g) = 1,\quad \varepsilon(x) = 0,
    \]
Since the coproduct has the structure needed for the previous 2D generalization we can propose a 2D bialgebra based on the Taft Hopf algebra with the following 2D opertators:

\begin{align}\label{eqSqx}
\Sq^{n,m}(x)  &= \sum_{\substack{1\le i \le n \\ 1\le j \le m}} \hat{x}_{i,j}, \quad \hat{x}_{i,j} =
\begin{tikzpicture}[baseline=(current bounding box.center)]
\node (mat) at (0,0) { $
     \begin{pmatrix}
           & & & g & & &  \\
           &g & & \vdots & & g & \\
           & & & g & & & \\
           1& \cdots & 1& x &g &\cdots &g  \\
           & & & 1 & & & \\
           &1 & & \vdots & & 1& \\
           & & & 1 & & &
        \end{pmatrix}$ };
\draw[<-, thick] (2.3,0) --++ (0.4,0) node[right] {$j$};
\draw[<-, thick] (0,1.8) --++ (0,0.3) node[above] {$i$};
\end{tikzpicture}
 \\
         \label{eqSqg}
    \Sq^{n,m}(g) &=
             \begin{pmatrix}
           g&\cdots &g  \\
           \vdots& g& \vdots \\
          g & \cdots &g 
         \end{pmatrix} \equiv g^{\otimes nm}  \ .
\end{align}

\subsubsection{A generalization from another point of view} 
The previous arrangement of any 1D coalgebra with the coproduct given in Eq.~\eqref{coalprim} into a 2D lattice as \eqref{horiorder} can be modified. The generalization is as follows. Let us denote the term in the sum of $\Sq^{n,m}(v)$ that contains the $v$ in position $(i,j)$ as $\hat{v}_{i,j}$. Let us take an angle $\theta$ and assign $b$ to the lattice positions that are covered by the disk section between the angles $[\theta, \theta+\pi)$ with center in ${i,j}$, and the rest with $a$. The initial proposal of Eq.~\eqref{defpiv} corresponds to $\theta=0$. We note that since the lattice is discrete there are several angles that give rise to the same operators for the a given system size $n\times m$. As an example, let us take $\theta= \pi/4$ so in a $4\times 4$ lattice we have this term:
$$\hat{v}_{2,3} = 
    \begin{pmatrix}
           b & b & b& a  \\
           b&v &a & a \\
           a& a& a& a  \\
           a& a & a& a \\
    \end{pmatrix} \ ,$$
that does no appear in Eq.~$\eqref{defpiv}$. Importantly, this generalization can also be seen as a non-local rearrangement of the 1D coproduct in the 2D lattice by ordering the sites following the angle $\theta$. Then, the compatible algebra structure is also preserved in this 2D generalization. In the case of $\theta=\pi/4$ we can fill the 2D lattice as follows:
$$
\begin{tikzpicture}[scale=1]
    \node[disk large] at (-3.669, 0) {};
    \node[disk large] at (-2.822, 0) {};
    \node[disk large] at (-1.976, 0) {};
    \node[disk large] at (-1.129, 0) {};
    \node[disk large] at (-3.669, 0.847) {};
    \node[disk large] at (-2.822, 0.847) {};
    \node[disk large] at (-1.976, 0.847) {};
    \node[disk large] at (-1.129, 0.847) {};
    \node[disk large] at (-1.129, 1.693) {};
    \node[disk large] at (-3.669, 1.693) {};
    \node[disk large] at (-2.822, 1.693) {};
    \node[disk large] at (-1.976, 1.693) {};
    \node[disk large] at (-1.129, 2.54) {};
    \node[disk large] at (-1.976, 2.54) {};
    \node[disk large] at (-2.822, 2.54) {};
    \node[disk large] at (-3.669, 2.54) {};
    \draw[-Latex] (-1.976, 0) -- (-1.411, 0.564);
    \draw[-Latex] (-2.822, 0) -- (-2.258, 0.564);
    \draw[-Latex] (-1.976, 0.847) -- (-1.411, 1.411);
    \draw[-Latex] (-1.976, 1.693) -- (-1.411, 2.258);
    \draw[-Latex] (-2.822, 0.847) -- (-2.258, 1.411);
    \draw[-Latex] (-3.669, 0) -- (-3.104, 0.564);
    \draw[-Latex] (-3.669, 0.847) -- (-3.104, 1.411);
    \draw[-Latex] (-2.822, 1.693) -- (-2.258, 2.258);
    \draw[-Latex] (-3.669, 1.693) -- (-3.104, 2.258);
    \draw[-Latex] (-1.411, 0.564) -- (-0.847, 1.129);
    \draw[-Latex] (-1.411, 1.411) -- (-0.847, 1.976);
    \draw[-Latex] (-1.411, 2.258) -- (-0.847, 2.822);
    \draw[-Latex] (-2.258, 2.258) -- (-1.693, 2.822);
    \draw[-Latex] (-3.104, 2.258) -- (-2.54, 2.822);
    \draw[-Latex] (-1.411, -0.282) -- (-0.847, 0.282);
    \draw (-2.258, 1.411) -- (-1.976, 1.693);
    \draw (-3.104, 0.564) -- (-2.822, 0.847);
    \draw (-3.104, 1.411) -- (-2.822, 1.693);
    \draw (-2.258, 0.564) -- (-1.976, 0.847);
    \draw (-2.822, 0) -- (-3.104, -0.282);
    \draw (-3.669, 0) -- (-3.951, -0.282);
    \draw (-3.669, 0.847) -- (-3.951, 0.564);
    \draw (-3.669, 1.693) -- (-3.951, 1.411);
    \draw[shift={(-3.757, 2.467)}, xscale=1.065, yscale=1.086, Latex-] (0, 0) -- (-0.282, -0.282);
    \draw[black!75, densely dotted] (-1.976, 0) -- (-2.258, -0.282);
    \draw[black!75, densely dotted] (-0.847, 0.282) .. controls (-0.564, 0.564) and (-0.706, 0.564) .. (-0.964, 0.423) .. controls (-1.223, 0.282) and (-1.599, 0) .. (-1.858, -0.212) .. controls (-2.117, -0.423) and (-2.258, -0.564) .. (-2.258, -0.282);
    \draw[black!75, densely dotted] (-0.847, 1.129) .. controls (-0.564, 1.411) and (-0.847, 1.411) .. (-1.27, 1.129) .. controls (-1.693, 0.847) and (-2.258, 0.282) .. (-2.611, -0.071) .. controls (-2.963, -0.423) and (-3.104, -0.564) .. (-3.104, -0.282);
    \draw[black!75, densely dotted] (-0.847, 1.976) .. controls (-0.564, 2.258) and (-0.988, 2.117) .. (-1.576, 1.67) .. controls (-2.164, 1.223) and (-2.916, 0.47) .. (-3.363, 0.024) .. controls (-3.81, -0.423) and (-3.951, -0.564) .. (-3.951, -0.282);
    \draw[black!75, densely dotted] (-0.847, 2.822) .. controls (-0.564, 3.104) and (-0.847, 3.104) .. (-1.458, 2.634) .. controls (-2.07, 2.164) and (-3.01, 1.223) .. (-3.551, 0.753) .. controls (-4.092, 0.282) and (-4.233, 0.282) .. (-3.951, 0.564);
    \draw[black!75, densely dotted] (-1.693, 2.822) .. controls (-1.411, 3.104) and (-1.834, 2.963) .. (-2.328, 2.611) .. controls (-2.822, 2.258) and (-3.387, 1.693) .. (-3.739, 1.411) .. controls (-4.092, 1.129) and (-4.233, 1.129) .. (-3.951, 1.411);
    \draw[black!75, densely dotted] (-2.54, 2.822) .. controls (-2.258, 3.104) and (-2.681, 2.963) .. (-3.034, 2.752) .. controls (-3.387, 2.54) and (-3.669, 2.258) .. (-3.881, 2.117) .. controls (-4.092, 1.976) and (-4.233, 1.976) .. (-3.951, 2.258);
\end{tikzpicture} \ .
$$
Similarly, coproducts maps can be defined. 

The way in which the assignment of the elements is done, by a disk covering certain angle, in the 2D lattice allows for its application in the continuum. We left for future work this problem.

\section{Tensor network states create coalgebras}\label{PEPS}

In Ref.~\cite{molnar22}, the tensor network representation of coalgebras and bialgebras is developed. Concretely, for a given coalgebra and a representation of it, a matrix product state with boundaries is associated and vice-versa. 

It is clear that projected entangled pair states (PEPS) \cite{Verstraete04} are vectors that can be defined for any system size and as such, they should carry a 2D coalgebra structure. In this section, we show that given a PEPS and a set of allowed boundaries, a 2D coproduct can be defined from it.

The ingredients to construct a 2D coalgebra from a PEPS with a boundary is as follows. Let us consider a PEPS tensor $B\in V\otimes (\mathbb{C}^D)^{\otimes 4}, \ B=\sum_{i,j} B^j_{i_1,i_2,i_3,i_4} \ket{i_1,i_2,i_3,i_4}$ where $j$ runs into the basis of the physical space $V$ and $i$ denotes the index of the virtual levels with bond dimension $D$. We also need a set of MPS tensors $A_s\in \mathbb{C}^D\otimes (\mathbb{C}^\chi)^{\otimes 2} $ where $s\in \{l,r,t,b\}$ and stands for left, right, top, bottom and denotes in which part of the boundary of the PEPS the tensor $A$ is placed and the virtual dimension of $A$ is denoted by $\chi$.

For every $v\in V$ we will associate a matrix $b_v \in \mathcal{B} \subset \mathcal{M}_\chi$ (the set of boundaries) and we define:

\begin{equation}
    \ket{v} = \sum_{i,j} \Tr[A^{i_1}_l A^{i_2}_t A^{i_3}_r A^{i_4}_b b_v] B^j_{i_1,i_2,i_3,i_4} \ket{j} =
\begin{tikzpicture}[scale=1]
    \draw[RoyalBlue!80, very thick] (-1.411, 1.129) -- (-1.411, 1.693);
    \draw[very thick] (-2.54, 1.129) -- (-0.282, 1.129) -- (-0.564, 1.129);
    \draw[very thick] (-1.976, 0.282) -- (-0.847, 1.976);
    \node[disk normal] at (-1.411, 1.129) {};
    \draw[RedOrange, very thick] (-0.847, 1.976) .. controls (-1.599, 2.164) and (-2.164, 1.881) .. (-2.54, 1.129);
    \draw[RedOrange, very thick] (-2.54, 1.129) .. controls (-2.54, 0.753) and (-2.352, 0.47) .. (-1.976, 0.282);
    \draw[RedOrange, very thick] (-1.976, 0.282) .. controls (-1.223, 0.282) and (-0.659, 0.564) .. (-0.282, 1.129);
    \draw[RedOrange, very thick] (-0.847, 1.976) .. controls (-0.47, 1.787) and (-0.282, 1.505) .. (-0.282, 1.129);
    \node[disk normal] at (-2.54, 1.129) {};
    \node[disk normal] at (-1.976, 0.282) {};
    \node[disk normal] at (-0.282, 1.129) {};
    \node[disk normal] at (-0.847, 1.976) {};
    \node[anchor=center] at (-1.369, 0.902) {$B$};
    \node[anchor=center] at (0.006, 1.079) {$A_r$};
    \node[anchor=center] at (-0.665, 2.092) {$A_t$};
    \node[anchor=center] at (-2.113, 0.085) {$A_b$};
    \node[anchor=center] at (-2.848, 1.149) {$A_l$};
    \node[disk normal] at (-2.344, 0.583) {};
    \node[anchor=center] at (-2.559, 0.498) {$b_v$};
\end{tikzpicture} \ ,
\end{equation}
where we impose that this map $b_v\to v$ is injective.
This construction is naturally generalized to any system size, for example in a $2\times 2$ lattice:
\begin{equation}
\Sq^{2,2}(v)=  
\begin{tikzpicture}[scale=1]
    \draw[very thick] (-0.327, 3.372) -- (-2.002, 1.395);
    \draw[very thick] (-0.586, 1.427) -- (1.127, 3.406);
    \draw[very thick] (-1.411, 2.822) -- (1.129, 2.822);
    \draw[very thick] (-1.976, 1.976) -- (0.282, 1.976);
    \draw[RoyalBlue!80, very thick] (-0.792, 2.822) -- (-0.793, 3.164);
    \draw[RoyalBlue!80, very thick] (-1.51, 1.976) -- (-1.513, 2.338);
    \draw[RoyalBlue!80, very thick] (-0.111, 1.976) -- (-0.107, 2.329);
    \draw[RoyalBlue!80, very thick] (0.622, 2.822) -- (0.613, 3.15);
    \node[disk normal] at (-0.792, 2.822) {};
    \node[disk normal] at (-1.51, 1.976) {};
    \node[disk normal] at (-0.111, 1.976) {};
    \node[disk normal] at (0.622, 2.822) {};
    \draw[RedOrange, very thick] (-0.327, 3.372) .. controls (-0.745, 3.38) and (-1.101, 3.196) .. (-1.395, 2.822);
    \draw[RedOrange, very thick] (-1.4, 2.822) -- (-1.976, 1.976);
    \draw[RedOrange, very thick] (-1.972, 1.981) .. controls (-2.066, 1.854) and (-2.068, 1.669) .. (-1.976, 1.425);
    \draw[RedOrange, very thick] (-1.977, 1.425) -- (-0.579, 1.435);
    \draw[RedOrange, very thick] (-0.579, 1.435) .. controls (-0.215, 1.437) and (0.072, 1.617) .. (0.282, 1.976);
    \draw[RedOrange, very thick] (0.28, 1.976) .. controls (0.846, 2.54) and (1.128, 2.822) .. (1.128, 2.822);
    \draw[RedOrange, very thick] (1.128, 2.822) .. controls (1.285, 3.018) and (1.284, 3.212) .. (1.126, 3.406);
    \draw[RedOrange, very thick] (1.134, 3.396) -- (-0.327, 3.372);
    \node[disk normal] at (-1.981, 1.42) {};
    \node[disk normal] at (-1.963, 1.976) {};
    \node[disk normal] at (-2.044, 1.738) {};
    \node[disk normal] at (-0.558, 1.46) {};
    \node[disk normal] at (0.276, 1.965) {};
    \node[disk normal] at (1.11, 2.822) {};
    \node[disk normal] at (1.146, 3.381) {};
    \node[disk normal] at (-0.346, 3.372) {};
    \node[disk normal] at (-1.406, 2.815) {};
    \node[anchor=center] at (-2.268, 1.739) {$b_v$};
    \node[anchor=center] at (-0.642, 2.581) {$B$};
    \node[anchor=center] at (-0.307, 3.61) {$A_t$};
    \node[anchor=center] at (-1.871, 1.213) {$A_b$};
    \node[anchor=center] at (-2.101, 2.269) {$A_l$};
    \node[anchor=center] at (0.561, 1.92) {$A_r$};
    \node[anchor=center] at (1.225, 3.662) {$A_t$};
    \node[anchor=center] at (1.508, 2.721) {$A_r$};
    \node[anchor=center] at (-1.722, 2.969) {$A_l$};
    \node[anchor=center] at (-0.511, 1.193) {$A_b$};
    \node[anchor=center] at (-1.306, 1.75) {$B$};
    \node[anchor=center] at (0.708, 2.607) {$B$};
    \node[anchor=center] at (-0.003, 1.77) {$B$};
\end{tikzpicture}
\end{equation}

The map given by $b_v\to \Sq(v)$ is linear since the boundary $b_v + b_w$ results into $\Sq(v) + \Sq(w)$ for any system size due to the linearity of the trace. The point here is that given a PEPS with its boundary the maps $\sq^n$ can always be defined implicitly as
$$
\begin{tikzpicture}[scale=1]
    \draw[RedOrange, very thick] (-2.884, 1.289) .. controls (-2.259, 1.301) and (-2.258, 1.076) .. (-2.772, 0.726);
    \draw[RedOrange, very thick] (-2.772, 0.726) -- (-3.493, -0.098);
    \draw[RedOrange, very thick] (-4.555, -0.671) .. controls (-4.198, -0.756) and (-3.843, -0.567) .. (-3.489, -0.103);
    \draw[very thick] (1.005, 1.28) -- (-0.669, -0.697);
    \draw[very thick] (0.746, -0.665) -- (2.459, 1.314);
    \draw[very thick] (-0.079, 0.73) -- (2.461, 0.73);
    \draw[very thick] (-0.643, -0.116) -- (1.614, -0.116);
    \draw[RoyalBlue!80, very thick] (0.54, 0.73) -- (0.539, 1.072);
    \draw[RoyalBlue!80, very thick] (-0.178, -0.116) -- (-0.181, 0.246);
    \draw[RoyalBlue!80, very thick] (1.221, -0.116) -- (1.225, 0.237);
    \draw[RoyalBlue!80, very thick] (1.954, 0.73) -- (1.946, 1.058);
    \node[disk normal] at (0.54, 0.73) {};
    \node[disk normal] at (-0.178, -0.116) {};
    \node[disk normal] at (1.221, -0.116) {};
    \node[disk normal] at (1.954, 0.73) {};
    \draw[RedOrange, very thick] (1.005, 1.28) .. controls (0.587, 1.288) and (0.231, 1.104) .. (-0.062, 0.73);
    \draw[RedOrange, very thick] (-0.068, 0.73) -- (-0.643, -0.116);
    \draw[RedOrange, very thick] (-0.64, -0.111) .. controls (-0.734, -0.238) and (-0.736, -0.423) .. (-0.644, -0.667);
    \draw[RedOrange, very thick] (-0.644, -0.667) -- (0.753, -0.657);
    \draw[RedOrange, very thick] (0.753, -0.657) .. controls (1.117, -0.655) and (1.404, -0.475) .. (1.614, -0.116);
    \draw[RedOrange, very thick] (1.613, -0.116) .. controls (2.178, 0.448) and (2.46, 0.73) .. (2.46, 0.73);
    \draw[RedOrange, very thick] (2.46, 0.73) .. controls (2.617, 0.926) and (2.616, 1.12) .. (2.459, 1.314);
    \draw[RedOrange, very thick] (2.466, 1.304) -- (1.005, 1.28);
    \node[disk normal] at (-0.649, -0.672) {};
    \node[disk normal] at (-0.631, -0.116) {};
    \node[disk normal] at (-0.712, -0.354) {};
    \node[disk normal] at (0.774, -0.632) {};
    \node[disk normal] at (1.608, -0.127) {};
    \node[disk normal] at (2.442, 0.73) {};
    \node[disk normal] at (2.478, 1.289) {};
    \node[disk normal] at (0.986, 1.28) {};
    \node[disk normal] at (-0.073, 0.723) {};
    \node[anchor=center] at (-0.936, -0.353) {$b_v$};
    \node[anchor=center] at (0.69, 0.489) {$B$};
    \node[anchor=center] at (1.025, 1.518) {$A_t$};
    \node[anchor=center] at (-0.539, -0.879) {$A_b$};
    \node[anchor=center] at (-0.768, 0.177) {$A_l$};
    \node[anchor=center] at (1.893, -0.172) {$A_r$};
    \node[anchor=center] at (2.557, 1.57) {$A_t$};
    \node[anchor=center] at (2.84, 0.629) {$A_r$};
    \node[anchor=center] at (-0.39, 0.877) {$A_l$};
    \node[anchor=center] at (0.821, -0.899) {$A_b$};
    \node[anchor=center] at (0.026, -0.343) {$B$};
    \node[anchor=center] at (2.04, 0.515) {$B$};
    \node[anchor=center] at (1.329, -0.322) {$B$};
    \draw[very thick] (-2.913, 1.299) -- (-4.588, -0.677);
    \draw[RoyalBlue!80, very thick] (-3.379, 0.749) -- (-3.379, 1.092);
    \draw[RoyalBlue!80, very thick] (-4.096, -0.097) -- (-4.099, 0.265);
    \node[disk normal] at (-3.379, 0.749) {};
    \node[disk normal] at (-4.096, -0.097) {};
    \draw[RedOrange, very thick] (-2.913, 1.299) .. controls (-3.331, 1.307) and (-3.687, 1.124) .. (-3.981, 0.749);
    \draw[RedOrange, very thick] (-3.987, 0.749) -- (-4.562, -0.097);
    \draw[RedOrange, very thick] (-4.558, -0.092) .. controls (-4.652, -0.219) and (-4.654, -0.404) .. (-4.563, -0.648);
    \node[disk normal] at (-4.567, -0.653) {};
    \node[disk normal] at (-4.549, -0.097) {};
    \node[disk normal] at (-4.63, -0.335) {};
    \node[disk normal] at (-2.932, 1.299) {};
    \node[disk normal] at (-3.992, 0.742) {};
    \node[anchor=center] at (-4.854, -0.333) {$b_v$};
    \node[anchor=center] at (-3.228, 0.508) {$B$};
    \node[anchor=center] at (-4.457, -0.86) {$A_b$};
    \node[anchor=center] at (-4.687, 0.196) {$A_l$};
    \node[anchor=center] at (-4.309, 0.896) {$A_l$};
    \node[anchor=center] at (-3.892, -0.323) {$B$};
    \draw[very thick] (-3.975, 0.744) -- (-2.785, 0.738);
    \draw[very thick] (-4.565, -0.115) -- (-3.48, -0.094);
    \node[disk normal] at (-2.778, 0.737) {};
    \node[disk normal] at (-3.485, -0.106) {};
    \node[anchor=center] at (-2.834, 1.573) {$A_t$};
    \node[anchor=center] at (-2.486, 0.601) {$A_r$};
    \node[anchor=center] at (-3.22, -0.189) {$A_r$};
    \draw[very thick] (-4.983, 1.983) .. controls (-5.594, 0.963) and (-5.602, -0.078) .. (-5.008, -1.141);
    \draw[very thick] (-2.258, 1.977) .. controls (-1.87, 1.043) and (-1.861, 0.024) .. (-2.23, -1.08);
    \node[anchor=center] at (-5.808, 0.6) {$2$};
    \node[anchor=center] at (-5.792, 0.013) {$x$};
    \node[anchor=center, font=\large] at (-1.568, 0.514) {$=$};
    \draw[thick] (-6.38, 0.512) rectangle (-5.963, 0.133);
\end{tikzpicture} \ ,
$$
where the only requirement is to be injective, in particular for the previous case the map $ v \to \Sq^{2,1}(v)$, to prevent that $\Sq(v) = \Sq(w)$ if $v\neq w$.

\subsection{Two PEPS for the main example}
In this section we provide the PEPS tensor that realize the 2D coalgebra of Section \ref{sec:pivot}. We find two possible representations, one with smaller virtual level but a non-trivial MPS boundary and the other with bigger virtual dimension but trivial MPS boundary.

The smallest bond dimension PEPS tensor has $D=2$ and its non-zero components are:
\begin{equation}
     \begin{tikzpicture}[baseline=1pt]
    \node[tensor] at (0,0){};
    \draw[thick] (-0.5,0)--(0.5,0);
    \draw[thick] (0,-0.5)--(0,0.5);
    \draw[thick] (0,0)--(0.2,0.2);
    \end{tikzpicture} =
\setlength\arraycolsep{2pt}
  \begin{matrix}   &1&\\ 0 &v&1\\ & 0&\end{matrix} 
+ \begin{matrix} &1 & \\ 1 &b  &1 \\  & 0& \end{matrix} 
+ \begin{matrix} &1 & \\ 1 &b  &1 \\  & 1& \end{matrix} 
+ \begin{matrix} &0 & \\ 0 &a &0 \\  & 0& \end{matrix}
+ \begin{matrix} &1 & \\ 0 &a &0 \\  & 0& \end{matrix}
\end{equation}
This representation has an MPS boundary that correlates the right and left parts as a $W$-state with $\chi=2$ and $A_t= \bra{1}, A_b= \bra{0}$.

The other representation has $D=4$. The non-trivial components of the tensor are
\begin{align*}
         \begin{tikzpicture}[baseline=1pt]
    \node[tensor] at (0,0){};
    \draw[thick] (-0.5,0)--(0.5,0);
    \draw[thick] (0,-0.5)--(0,0.5);
    \draw[thick] (0,0)--(0.2,0.2);
    \end{tikzpicture} &= 
\setlength\arraycolsep{2pt}
 \begin{matrix} &2 & \\ 1 &b  &1 \\  & 2& \end{matrix} 
+ \begin{matrix} &3 & \\ 1 &b  &3 \\  & 3& \end{matrix}
+ \begin{matrix} &3 & \\ 3 &b  &3 \\  & 3& \end{matrix} 
\\ &
\setlength\arraycolsep{2pt}
+ \begin{matrix} &2 & \\ 0 &a &0 \\  & 0& \end{matrix}
+ \begin{matrix}   &3&\\ 0 &v&3\\ & 0&\end{matrix} 
+ \begin{matrix} &3 & \\ 3 &b  &3 \\  & 1& \end{matrix}
\\  & 
\setlength\arraycolsep{2pt}
+ \begin{matrix} &0 & \\ 0 &a &0 \\  & 0& \end{matrix}
+ \begin{matrix} &0 & \\ 0 &a &2 \\  & 0& \end{matrix}
+ \begin{matrix} &1 & \\ 2 &a &2 \\  &1& \end{matrix} \ ,
\end{align*}
and the MPS boundary is trivial, i.e. it is a tensor product state. The defining MPS tensors are $A_t= A_r= \bra{2}+\bra{3}$ and $A_b= A_l= \bra{0}+\bra{1}$.

\section{\texorpdfstring{Application to Quantum groups: $U_q[su(2)]$}{Application to Quantum groups: Uq[su(2)]}}
\label{secQG}

The quantum group $U_q[su(2)]$ is given by the basis $\{S^\pm, K^\pm \}$ which is a deformation of $su(2)$ where $K^\pm=q^{\pm S_z}$ \cite{Pinkbook}. The defining relations for the quantum group $U_q[su(2)]$ are:
\begin{align*}
    [S^+, S^-] &= \frac{K^{+2} - K^{-2}}{q - q^{-1}}, \\
    K^\alpha \cdot S^\pm &= q^{\pm \alpha} S^\pm \cdot K^\alpha, \quad \alpha = \pm 1.
\end{align*}

The coproduct for $U_q[su(2)]$ is defined as follows:
\begin{align}\label{defQGS}
    \Delta(S^\pm) &= S^\pm \otimes K^+ + K^- \otimes S^\pm, \\
    \Delta(K^\pm) &= K^\pm \otimes K^\pm. \notag
\end{align}
This structure generalizes to higher tensor powers:
\[
\Delta^n(S^\pm) = \sum_{p=0}^{n} (K^+)^{\otimes p} \otimes S^\pm \otimes (K^-)^{\otimes (n-p)}, \quad
\Delta^n(K^\pm) = (K^\pm)^{\otimes (n+1)}.
\]
The coproduct of the commutator $[S^+, S^-]$ also follows a similar pattern:
\[
\Delta^n([S^+, S^-]) = \sum_{p=0}^{n} (K^{+2})^{\otimes p} \otimes [S^+, S^-] \otimes (K^{-2})^{\otimes (n-p)} = \frac{\Delta^n( K^{+2}) - \Delta^n( K^{-2})}{q - q^{-1}}.
\]

\subsection{\texorpdfstring{The Quantum group $U_q[su(2)]$ in 2D}{The Quantum group Uq[su(2)] in 2D}}

As we have seen above, the coproduct of the quantum group $U_q[su(2)]$ satisfies the conditions needed to use the construction of the example of Section \ref{sec:pivot} to create a 2D coalgebra and its compatible algebra. In this section we will study this example in more detail. 
The 2D operators of $U_q[su(2)]$ that we propose have the following form:
\begin{align}\label{eqSqS}
\Sq^{n,m}(S^\pm)  &= \sum_{\substack{1\le i \le n \\ 1\le j \le m}} \hat{S}^\pm_{i,j}, \quad \hat{S}^\pm_{i,j} =
\setlength\arraycolsep{0.5pt}
         \begin{pmatrix}
           & & & K^+ & & &  \\
           &K^+ & & \vdots & & K^+ & \\
           & & & K^+ & & & \\
           K^-& \cdots & K^-& (S^\pm)_{(i,j)} &K^+ &\cdots &K^+ \\
           & & & K^- & & & \\
           &K^- & & \vdots & & K^-& \\
           & & & K^- & & &
         \end{pmatrix} \ , \\
         \label{eqSqK}
    \Sq^{n,m}(K^\pm) &=
             \begin{pmatrix}
           K^\pm&\cdots &K^\pm  \\
           \vdots& K^\pm& \vdots \\
          K^\pm & \cdots &K^\pm 
         \end{pmatrix} \equiv (K^\pm)^{\otimes nm} \ , \quad \Sq^{n,m}(S^z) = \sum_{i,j} S^z_{ij}\otimes \id  \ .
\end{align}

If $q\to 1$, these operators reduces to the non-deformed 2D case: $\Sq^{n,m}(S^\pm) = \sum S^\pm _{ij}\otimes \id$ and $\Sq^{n,m}(K^\pm)=\id$.

These operators can be constructed using the coproducts that have been defined in the section \ref{sec:pivot}. Moreover, as was shown there they have a compatible bialgebra structure. It is, however, instructive to write and check the algebraic relations explicitly. These operators reproduce the algebraic relations between $K^\pm$ and $S^\pm$ for any 2D system size: 
\begin{equation}
 \Sq(K^\alpha) \cdot \Sq(S^\pm) = q^{\pm \alpha} \Sq(S^\pm) \cdot \Sq(K^\alpha) \ , \ \alpha= \pm 1 \ .
 \end{equation}
The commutator of the 2D representation of $S^\pm$ satisfies the following:
\begin{align}\label{commutS}
    [\Sq(S^+),\Sq(S^-)] = & \ \sum_{i,j} [\hat{S}^+_{i,j},\hat{S}^-_{i,j}] =  \Sq([S^+,S^-])= \frac{\Sq(K^{+2})-\Sq(K^{-2})}{q-q^{-1}} \ .
    \end{align}
To show the previous equation note first that $[\Sq(S^+),\Sq(S^-)] = \sum_{i,j,i',j'} [\hat{S}^+_{ij},\hat{S}^-_{i'j'}] $. Let us then group the indices $\{i,j,i',j'\}$ into two disjoint sets: 
$\{i,j,i',j'\}= \{ i=i', j=j'\} \cup \{ (i,j)\neq (i',j')\} $. Then we can check that
$$ [\hat{S}^+_{ij},\hat{S}^-_{i'j'}] = 0 \ {\rm if} \ (i,j)\neq (i',j') \ ,$$
since it involves terms proportional to 
$$ S^+K^{\mp}\otimes K^{\pm}S^- - K^{\mp}S^+\otimes S^-K^{\pm}  = 0 \ ,$$
where the tensor product is over the sites $(i,j)$ and $(i',j')$. 
Therefore the only non-trivial contribution comes from the set $ \{ i=i', j=j'\}$, which results in 
$$\sum_{i,j} 
    \setlength\arraycolsep{1pt}
    \begin{pmatrix}
           & & & K^{+2} & & &  \\
           & K^{+2} & & \vdots & & K^{+2} & \\
           & & & K^{+2} & & & \\
           K^{-2}& \cdots & K^{-2}& [S^+,S^-]_{i,j} &K^{+2} &\cdots &K^{+2} \\
           & & & K^{-2} & & & \\
           &K^{-2} & & \vdots & & K^{-2}& \\
           & & & K^{-2} & & &
    \end{pmatrix}
    = \frac{1}{q-q^{-1}}
    \sum_{i,j} 
    \setlength\arraycolsep{1pt}
    \begin{pmatrix}
           & & & K^{+2} & & &  \\
           & K^{+2} & & \vdots & & K^{+2} & \\
           & & & K^{+2} & & & \\
           K^{-2}& \cdots & K^{-2}& (K^{+2}-K^{-2})_{i,j} &K^{+2} &\cdots &K^{+2} \\
           & & & K^{-2} & & & \\
           &K^{-2} & & \vdots & & K^{-2}& \\
           & & & K^{-2} & & &
    \end{pmatrix} \ .$$
The sum is telescopic thus we arrive at $(\bigotimes_{ij} K^{+2}_{ij} - \bigotimes_{ij} K^{-2}_{ij}) /(q-q^{-1})$, the desired result. For example,  for a $3\times 3$ lattice, the terms with the same color cancel each other:
\begin{align*}
    \begin{pmatrix}
           {\color{purple}K_+^2}{\color{red}-K_-^{2}}&K_+^2 &K_+^2  \\
           K_-^{2}& K_-^{2}& K_-^{2} \\
          K_-^{2} & K_-^{2} &K_-^{2} 
         \end{pmatrix} & +
    \begin{pmatrix}
          K_-^{2} &{\color{red}K_+^2}{\color{blue}-K_-^{2}}&K_+^2  \\
           K_-^{2}& K_-^{2}& K_-^{2} \\
          K_-^{2} & K_-^{2} &K_-^{2} 
         \end{pmatrix} +
    \begin{pmatrix}
          K_-^{2} &K_-^{2}&{\color{blue}K_+^2}-K_-^{2}  \\
           K_-^{2}& K_-^{2}& K_-^{2} \\
          K_-^{2} & K_-^{2} &K_-^{2} 
         \end{pmatrix} + \\
    \begin{pmatrix}
           K_+^{2}& K_+^{2}& K_+^{2} \\
           {\color{yellow}K_+^2}{\color{green}-K_-^{2}}&K_+^2 &K_+^2  \\
          K_-^{2} & K_-^{2} &K_-^{2} 
         \end{pmatrix} & +
    \begin{pmatrix}
           K_+^{2}& K_+^{2}& K_+^{2} \\
        K_-^{2} &{\color{green}K_+^2}{\color{orange}-K_-^{2}}&K_+^2  \\
          K_-^{2} & K_-^{2} &K_-^{2} 
         \end{pmatrix} +
    \begin{pmatrix}
           K_+^{2}& K_+^{2}& K_+^{2} \\
            K_-^{2} &K_-^{2}&{\color{orange}K_+^2}{\color{purple}-K_-^{2}}  \\
          K_-^{2} & K_-^{2} &K_-^{2} 
         \end{pmatrix} + \\
    \begin{pmatrix}
           K_+^{2}& K_+^{2}& K_+^{2} \\
          K_+^{2} & K_+^{2} &K_+^{2} \\
  K_+^2{\color{cyan}-K_-^{2}}&K_+^2 &K_+^2  
         \end{pmatrix} & +
    \begin{pmatrix}
           K_+^{2}& K_+^{2}& K_+^{2} \\
            K_+^{2} & K_+^{2} &K_+^{2} \\
        K_-^{2} &{\color{cyan}K_+^2}{\color{gray}-K_-^{2}}&K_+^2  
         \end{pmatrix} +
    \begin{pmatrix}
           K_+^{2}& K_+^{2}& K_+^{2}  \\
          K_+^{2} & K_+^{2} &K_+^{2} \\
        K_-^{2} &K_-^{2}&{\color{gray}K_+^2}{\color{yellow}-K_-^{2} }
         \end{pmatrix} \ ,
\end{align*}
and the black ones are the remaining ones.
For completeness let us write down the definition of the counits:
\begin{align*}
     \epsilon^n \left((K^-)^{\otimes p }\otimes S^\pm \otimes (K^+)^{\otimes n-p-1 } \right) &  = 0 \ , \ 0\le p < n \\
     \epsilon_y^n\left ((K^\pm )^{\otimes n}\right )& = 1\\
     \epsilon_x^n\left( ( K^+ )^{\otimes n} \otimes (K^- )^{\otimes n-p} \right)^T & = 1\\
\end{align*}
and the compatible antipode maps:
\begin{align*}
     S^n \left((K^+)^{\otimes p }\otimes S^\pm \otimes (K^-)^{\otimes n-p-1 } \right ) & = -q^\pm \left((K^+)^{\otimes p }\otimes S^\pm \otimes (K^-)^{\otimes n-p-1 } \right ) \, \ 0\le p < n \\
     S^n \left(1^{\otimes p }\otimes S^z \otimes 1^{\otimes n-p-1 }\right ) &= -\left(1^{\otimes p }\otimes S^z \otimes 1^{\otimes n-p-1 }\right ) \ , \ 0\le p < n \\
     S^n\left( (K^\pm)^{ \otimes n} \right ) & = (K^\mp)^{ \otimes n}\ .
\end{align*}
For the generating elements of the algebra $\{S^\pm, K^\pm\}$, the commutative diagrams that come from the $xy$-compatibility condition look like:
$$
\begin{tikzcd}
S^\pm\arrow{r}{\sq^1_x} \arrow{d}{\sq^1_y} & S^\pm\ K^+ + K^-\ S^\pm \arrow{d}{\sq^2_y} \\
\setlength\arraycolsep{0.5pt}
  \begin{matrix}  S^\pm \\ K^-\end{matrix} + 
  \begin{matrix}  K^+\\ S^\pm\end{matrix} \arrow{r}{\sq^2_x}  & 
\setlength\arraycolsep{0.5pt}
  \begin{matrix}  S^\pm&K^+\\ K^- &K^-\end{matrix} + \begin{matrix}  K^-&S^\pm\\ K^- &K^-\end{matrix} +
  \begin{matrix}  K^+&K^+\\ S^\pm &K^+\end{matrix}+
  \begin{matrix}  K^+&K^+\\ K^- &S^\pm\end{matrix}
\end{tikzcd}
\ , \quad 
\begin{tikzcd}
K^\pm\arrow{r}{\sq^1_x} \arrow{d}{\sq^1_y} & K^\pm \ K^\pm \arrow{d}{\sq^2_y} \\
\setlength\arraycolsep{0.5pt}
  \begin{matrix}  K^\pm \\ K^\pm\end{matrix} \arrow{r}{\sq^2_x}  & 
\setlength\arraycolsep{0.5pt}
  \begin{matrix}  K^\pm &K^\pm\\ K^\pm &K^\pm \end{matrix}
\end{tikzcd}
$$

\subsection{Invariant states: q-singlets}

We focus on a $2\times 2$ spin $1/2$ lattice ordered as $\begin{pmatrix}  1&2\\ 3 &4\end{pmatrix}$.
The $q$-deformed singlet on sites $i,j$ is defined as
$$\ket{s}^q_{i,j} = \frac{1}{\sqrt{q-q^{-1}}}(q^{1/2}\ket{01}_{i,j}-q^{-1/2}\ket{10}_{i,j})$$
and the algebra acts on it as:
\begin{align*}
\Delta(K^\pm) \ket{s}^q &=\ket{s}^q\ , \\
\Delta(S^\pm) \ket{s}^q &=0 \ .
\end{align*}

For our work the following relation is important:
$$(K^-\otimes K^+)\ket{s}^q = \frac{\sqrt{q^{-1}-q}}{\sqrt{q-q^{-1}}}\ket{s}^{q^{-1}} \ . $$

Let us rewrite the operators $S^\pm$ acting on this lattice in two equivalent ways:

\begin{align}
\Sq^{2,2}(S^\pm) &=
\setlength\arraycolsep{0.5pt}
  \begin{pmatrix}  S^\pm&K^+\\ K^- &K^-\end{pmatrix} + 
  \begin{pmatrix}  K^-&S^\pm\\ K^- &K^-\end{pmatrix} +
  \begin{pmatrix}  K^+&K^+\\ S^\pm &K^+\end{pmatrix}+
  \begin{pmatrix}  K^+&K^+\\ K^- &S^\pm\end{pmatrix} \\
  & = 
  \Delta(S^\pm)_{1,2}\otimes \Delta(K^-)_{3,4}+ \Delta(K^+)_{1,2}\otimes \Delta(S^\pm)_{3,4} \\
  & =
  K^-_3 K^+_2\otimes \Delta(S^\pm)_{4,1} 
  + 
 S^\pm_{3}K^+_2\otimes\Delta(K^+)_{1,4} +
  K^-_{3} S^\pm_2\otimes\Delta(K^-)_{1,4} \ .
\end{align}
For four spins $1/2$, its tensor product decomposed into the the follosing direct sum of $su(2)$ irreps:
$$\left (\frac{1}{2}\right)^{\otimes 4} = 0(\times2)\oplus 1(\times 3)\oplus 2 \ ,$$ 
where $(\times n)$ stands for a multiplicity $n$. It turns out that the two $s=0$ spaces are given by any combination of two horizontal $q$-singlets and two crossed $q$-singlets since they are annihilated by the plaquette operator $S^+$:

$$ \Sq^{2,2}(S^\pm) \cdot \left(  \alpha \ket{s}^q_{1,2}   \ket{s}^q_{3,4} + \beta \ket{s}^q_{3,2}  \ket{s}^q_{4,1}\right) = 0, \  \forall \alpha,\beta \in \mathbb{C} \ .$$
This kernel space can be depicted in the lattice as:
$$
\begin{tikzpicture}[scale=1]
    \node[disk large] at (-3.669, 1.129) {};
    \node[disk large] at (-2.54, 1.129) {};
    \node[disk large] at (-3.669, 2.258) {};
    \node[disk large] at (-2.54, 2.258) {};
    \node[disk large] at (-0.854, 1.125) {};
    \node[disk large] at (0.275, 2.254) {};
    \draw[very thick] (-3.669, 1.129) -- (-2.54, 1.129) -- (-2.54, 1.129);
    \draw[very thick] (-3.669, 2.258) -- (-2.54, 2.258);
    \draw[very thick] (-0.854, 1.125) -- (0.275, 2.254);
    \draw[very thick, thick bevel] (-0.854, 2.254) -- (0.275, 1.125);
    \draw (-3.951, 2.54) -- (-3.951, 0.847) -- (-3.951, 0.847);
    \draw (-1.136, 2.536) -- (-1.136, 0.843);
    \draw (-2.383, 2.553) -- (-2.1, 1.706) -- (-2.383, 0.859);
    \draw (0.386, 2.541) -- (0.669, 1.694) -- (0.386, 0.848);
    \node[disk large] at (-0.854, 1.125) {};
    \node[disk large] at (0.275, 2.254) {};
    \draw[very thick] (-0.854, 1.125) -- (0.275, 2.254);
    \draw[very thick, thick bevel] (-0.854, 2.254) -- (0.275, 1.125);
    \draw (-1.136, 2.536) -- (-1.136, 0.843);
    \draw (0.386, 2.541) -- (0.669, 1.694) -- (0.386, 0.848);
    \node[anchor=center] at (-1.693, 1.693) {$+ \beta$};
    \node[anchor=center] at (-4.233, 1.693) {$\alpha$};
    \node[disk large] at (-0.854, 2.254) {};
    \node[disk large] at (-0.854, 2.254) {};
    \node[disk large] at (0.275, 1.125) {};
    \node[disk large] at (0.275, 1.125) {};
\end{tikzpicture} \ .
$$
However, two vertical $q$-singlets are not in the kernel of this operator:
$$ \Sq^{2,2}(S^\pm) \cdot \ket{s}^q_{1,3}   \ket{s}^q_{2,4} = \frac{q^{1/2}-q^{-1/2}}{\sqrt{q-q^{-1}}}  (\ket{01}+\ket{10})_{1,3}\ket{11}_{2,4} + \ket{11}_{1,3}(\ket{01}+\ket{10})_{2,4}  \neq 0 \ ,$$
but in the limit $q\to 1$, the coefficient goes to zero recovering the classical result.

\subsection{The R-matrix in 2D}

One of the central structures in the theory of quantum groups is the R-matrix, which encodes the nontrivial braiding of representations and underpins the integrability of quantum systems. For the quantum group $U_q[su(2)]$, a q-deformation of the universal enveloping algebra of $su(2)$, the R-matrix plays a pivotal role in defining the quasitriangular Hopf algebra structure. It provides a solution to the quantum Yang–Baxter equation, ensuring consistency of particle-like excitations in low-dimensional quantum systems, and enables the construction of braid group representations, which are essential in applications to knot theory and topological quantum computation. In this section we first review the connstructionn in the 1D case and propose the solution for our 2D case.

\subsubsection{The one-dimensional case}
We first explain the one-dimensional $R$-matrix. The $R$-matrix is the intertwiner of \eqref{defQGS} with the permuted comultiplication $ \Delta^{per} $ defined as 
\beq
 \Delta^{per}  ( S^\pm)  = S^\pm  \otimes K^{-1}  + K \otimes S^\pm  
 \label{3}\ ,
 \eeq
that is,
 \beq
 R(q) \Delta (S^\pm)  = \Delta^{per} (S^\pm) R(q)
 \label{4} \ .
 \eeq
For the spin 1/2 representation one has
\beq
R(q) = 
\left( \begin{array}{cccc}
q & 0 & 0 & 0 \\
0 & 1 & q - q^{-1}  & 0 \\
0 & 0 & 1 & 0 \\
0 & 0 & 0 & q \\
\end{array}
\right)
= q^{ 2 S^z \otimes S^z} q^{\frac{1}{2}}  \left(  \mathbb{I} + (q - q^{-1} ) S^+ \otimes S^- \right)
\label{5}
\eeq
Let us write
\beq
q = e^h
\label{6} \ ,
\eeq
and take the semiclassical limit $h \ll 1$:
\beq
R(q) = \mathbb{I} + 2 h \, { \bf r} + O(h^2)
\label{7} \ ,
\eeq
where ${\bf r}$ denotes the classical $r$-matrix that is given by
\beq
{\bf r} = 
\left( \begin{array}{cccc}
\frac{1}{2} & 0 & 0 & 0 \\
0 & 0 & 1  & 0 \\
0 & 0 & 0 & 0 \\
0 & 0 & 0 & \frac{1}{2}  \\
\end{array}
\right)
= \frac{1}{4} + S^z_1 S^z_2 + S^+_1 S^-_2
\label{8} \ .
\eeq
The limit of the 1D-coproducts are
\barray
 \Delta S^\pm & = & S^\pm_1  +  S^\pm_2 + h ( S_1^\pm S^z_2 - S^z_1 S^\pm _2) + O(h^2) \label{9}  \ , \\
  \Delta^{per}  S^\pm & = & S^\pm_1 +  S^\pm_2 -  h ( S_1^\pm S^z_2 - S^z_1 S^\pm _2) + O(h^2)  \label{10} \ .
 \earray 
Eq. \eqref{4} becomes  
\beq
[r_{12},  S^\pm_1 + S^\pm_2 ] = -   S_1^\pm S^z_2 + S^z_1 S^\pm _2 + O(h) \ .
\label{11}
\eeq
As can be checked using the previous equations. 
 
\subsubsection{The proper 2D case}
  
The permuted comultiplication is defined with the replacement $K^+ \leftrightarrow K^{-1}$:
  
 \barray 
\widetilde{\Sq}^{2,2}(S^\pm) &= 
\setlength\arraycolsep{0.5pt}
  \begin{pmatrix}  S^\pm&K^-\\ K^+ &K^+\end{pmatrix} + 
  \begin{pmatrix}  K^+&S^\pm\\ K^+ &K^+\end{pmatrix} +
  \begin{pmatrix}  K^-&K^-\\ S^\pm &K^-\end{pmatrix}+
  \begin{pmatrix}  K^-&K^-\\ K^+ &S^\pm\end{pmatrix}    \label{12} \\
  & = 
  \Delta^{per}(S^\pm)_{1,2}\otimes \Delta(K^+)_{3,4}+ \Delta(K^-)_{1,2}\otimes \Delta^{per}(S^\pm)_{3,4}  \nonumber 
\earray 
We want  to find the operator $\mathfrak{R}(q)$ that satisfies 
 \beq
 \mathfrak{R}(q)\Sq^{2,2}(S^\pm)   = 
  \widetilde{\Sq}^{2,2}(S^\pm)  \mathfrak{R}(q)
 \label{13} \ .
 \eeq 
As in the 1D case we consider the semiclassical limit of Eq.~\eqref{7}:
 \barray 
\Sq^{2,2}(S^\pm) &=  & S^\pm_1  +  S^\pm_2 +  S^\pm_3  +  S^\pm_4 +   h  S_1^\pm ( S^z_2 - S^z_3 - S^z_4) +  h  S_2^\pm ( - S^z_1 - S^z_3 - S^z_4) 
 \label{14}  \\
& + &     h  S_3^\pm ( S^z_1 + S^z_2 + S^z_4)   +   h  S_4^\pm ( S^z_1 + S^z_2 - S^z_3)        + O(h^2) \ .
\nonumber 
\earray 
Therefore Eq.~\eqref{13} in the first order on $h$ reads as
\barray
[ \mathfrak{r}_{1234},   S^\pm_1  +  S^\pm_2 +  S^\pm_3  +  S^\pm_4 ] &  =  &   S_1^\pm (- S^z_2 + S^z_3 + S^z_4) +    S_2^\pm (  S^z_1 + S^z_3 + S^z_4) 
\label{15} \\
& - &   S_3^\pm ( S^z_1 + S^z_2 + S^z_4)   -     S_4^\pm ( S^z_1 + S^z_2 - S^z_3)  \ .\nonumber 
\earray 
To solve this equation let us define
\beq
A^\pm_{ij} =  -   S_i^\pm S^z_j + S^z_i S^\pm _j\ , \  i, j =1, \dots, 4 \ .
\label{16}
\eeq
According to Eq.~\eqref{11}, $A^\pm_{ij}$ satisfies
\beq
[r_{ij},  S^\pm_i + S^\pm_j ] = A^\pm_{ij} 
\label{17} \ .
\eeq
Using Eq.~\eqref{16},  Eq.~\eqref{15} can be written as 
\barray
[ \mathfrak{r}_{1234},   S^\pm_1  +  S^\pm_2 +  S^\pm_3  +  S^\pm_4 ] &  =  &   A^\pm_{12} +  A^\pm_{34} -  A^\pm_{14} -  A^\pm_{13} -  A^\pm_{23}-  A^\pm_{24} 
  \label{18} \ ,
\earray
that can be solved by 
\barray
 \mathfrak{r}_{1234} &  =  &   r_{12} + r_{34} -  r_{14} -r_{13} -  r_{23}-  r_{24} \ .
  \label{19} 
\earray
This solution is unique up to an additive constant. 
The previous equations indicate that the matrix $ \mathfrak{R}$ should be a product of $R$ matrices following a certain order
related to the signs of the classical $r$ matrices in Eq.~\eqref{19}. 
To show that this is the case, let us write Eq.~\eqref{3} and Eq.~\eqref{4}, and its inverse as follows,
\barray 
R_{12} ( S^\pm_1  K^{+}_2  + K_1^-  S^\pm_2 )R_{12}^{-1}  & = & S^\pm_1  K^{-}_2  + K^+_2  S^\pm_2 
\label{22} \ , \\
R_{12}^{-1}  ( S^\pm_1  K^{-}_2  + K_1^+  S^\pm_2 )R_{12}  & = & S^\pm_1  K^{+}_2  + K^-_2  S^\pm_2 \ .
\label{23}
\earray 
We shall now act with $R_{ij}\left( \bullet \right) R^{-1}_{ij}$ on $\Sq^{2,2}(S^\pm)$ several times using the previous equations as follows:
  \barray 
\Sq^{2,2}(S^\pm) &= 
\setlength\arraycolsep{0.5pt}
  \begin{pmatrix}  S^\pm&K^+\\ K^- &K^-\end{pmatrix} + 
  \begin{pmatrix}  K^-&S^\pm\\ K^- &K^-\end{pmatrix} +
  \begin{pmatrix}  K^+&K^+\\ S^\pm &K^+\end{pmatrix}+
  \begin{pmatrix}  K^+&K^+\\ K^- &S^\pm\end{pmatrix}   \\
  \stackrel{R_{12} \left( \bullet \right) R_{12}^{-1}}{\Longrightarrow}  & =
  \setlength\arraycolsep{0.5pt}
  \begin{pmatrix}  S^\pm&K^-\\ K^- &K^-\end{pmatrix} + 
  \begin{pmatrix}  K^+&S^\pm\\ K^- &K^-\end{pmatrix} +
  \begin{pmatrix}  K^+&K^+\\ S^\pm &K^+\end{pmatrix}+
  \begin{pmatrix}  K^+&K^+\\ K^- &S^\pm\end{pmatrix} 
  \nonumber   \\
  \stackrel{R_{34} \left( \bullet \right) R_{34}^{-1}}{\Longrightarrow}  & =
  \setlength\arraycolsep{0.5pt}
  \begin{pmatrix}  S^\pm&K^-\\ K^- &K^-\end{pmatrix} + 
  \begin{pmatrix}  K^+&S^\pm\\ K^- &K^-\end{pmatrix} +
  \begin{pmatrix}  K^+&K^+\\ S^\pm &K^-\end{pmatrix}+
  \begin{pmatrix}  K^+&K^+\\ K^+ &S^\pm\end{pmatrix} 
  \nonumber  
   \\
  \stackrel{R_{23}^{-1} \left( \bullet \right) R_{23}}{\Longrightarrow}  & =
  \setlength\arraycolsep{0.5pt}
  \begin{pmatrix}  S^\pm&K^-\\ K^- &K^-\end{pmatrix} + 
  \begin{pmatrix}  K^+&S^\pm\\ K^+ &K^-\end{pmatrix} +
  \begin{pmatrix}  K^+&K^-\\ S^\pm &K^-\end{pmatrix}+
  \begin{pmatrix}  K^+&K^+\\ K^+ &S^\pm\end{pmatrix} 
  \nonumber  
  \\
  \stackrel{R_{13}^{-1} \left( \bullet \right) R_{13}}{\Longrightarrow}  & =
  \setlength\arraycolsep{0.5pt}
  \begin{pmatrix}  S^\pm&K^-\\ K^+ &K^-\end{pmatrix} + 
  \begin{pmatrix}  K^+&S^\pm\\ K^+ &K^-\end{pmatrix} +
  \begin{pmatrix}  K^-&K^-\\ S^\pm &K^-\end{pmatrix}+
  \begin{pmatrix}  K^+&K^+\\ K^+ &S^\pm\end{pmatrix} 
  \nonumber  
   \\
  \stackrel{R_{24}^{-1} \left( \bullet \right) R_{24}}{\Longrightarrow}  & =
  \setlength\arraycolsep{0.5pt}
  \begin{pmatrix}  S^\pm&K^-\\ K^+ &K^-\end{pmatrix} + 
  \begin{pmatrix}  K^+&S^\pm\\ K^+ &K^+\end{pmatrix} +
  \begin{pmatrix}  K^-&K^-\\ S^\pm &K^-\end{pmatrix}+
  \begin{pmatrix}  K^+&K^-\\ K^+ &S^\pm\end{pmatrix} 
  \nonumber 
    \\
  \stackrel{R_{14}^{-1} \left( \bullet \right) R_{14}}{\Longrightarrow}  & =
  \setlength\arraycolsep{0.5pt}
  \begin{pmatrix}  S^\pm&K^-\\ K^+ &K^+\end{pmatrix} + 
  \begin{pmatrix}  K^+&S^\pm\\ K^+ &K^+\end{pmatrix} +
  \begin{pmatrix}  K^-&K^-\\ S^\pm &K^-\end{pmatrix}+
  \begin{pmatrix}  K^-&K^-\\ K^+ &S^\pm\end{pmatrix} 
  \nonumber  \\
  & =  \widetilde{\Sq}^{2,2}(S^\pm) & \nonumber 
  \earray 
  which yields the 2D $\mathfrak{R}$ matrix in terms of products of 1D $R$ matrices
  \beq
\mathfrak{R}_{1234} =  R^{-1}_{14} R_{24}^{-1} R_{13}^{-1} R_{23}^{-1} R_{34} R_{12} \ .
 \label{25}
 \eeq
 The semiclassical expansion of this equation agrees with Eq.~\eqref{19}.

\section{Conclusion and outlook}

In this paper we have explored a possible generalization of coproducts to the two dimensional square lattice together with their Hopf-algebra structures. We have focused on the quantum groups $U_q[su(2)]$, but the same construction is valid for the quantum group generated by $\{E,F,K,K^{-1}\}$ with $\Delta(E)= E\otimes K + 1\otimes E$, $\Delta(F)= F\otimes 1 + K^{-1}\otimes F$ and $[E,F]=(K-K^{-1}) /(q-q^{-1})$.

A central object in the formalism is the 2D R-matrix that intertwines the 2D coproducts of the coalgebraic symmetry. It will be very interesting to explore whether this R-matrix satisfies generalized Yang–Baxter equations, along the lines of the recent proposal by Korepin et al. in Ref.~\cite{Padmanabhan_2024}, where higher-dimensional or categorified Yang–Baxter structures are introduced. Understanding such generalized braid relations could deepen the connection between algebraic symmetry and integrability in 2D lattice systems. The existence of a well-defined R-matrix naturally leads to the study of topological invariants of knots and links that arise from the representation theory of the associated bialgebra. It would be compelling to investigate which invariants can be extracted from the 2D R-matrix in this setting, and whether they form a subclass or deformation of known invariants such as the Jones polynomial. Since the Jones polynomial arises from the standard quantum group $U_q[su(2)]$, any deviation or restriction induced by the 2D lattice structure may provide novel insights into categorified topological quantum field theories (TQFTs) and their lattice realizations.

A very interesting open problem is the study of 2D local Hamiltonians commuting with the operators of the 2D quantum group $U_q[su(2)]$. Moreover, we wonder what is the connection between the representation theory of these 2D Hopf algebras with fusion 2-categories. We leave the representation of these coalgebras in terms of PEPS for future work.

\section*{Acknowledgments}

JGR is funded by the FWF Erwin Schrödinger Program (Grant DOI 10.55776/J4796).
AM acknowledges support by the Austrian Science Fund(FWF) via Grants 10.55776/COE1, by the European Union – NextGenerationEU, and by the European Union’s Horizon 2020 research and innovation programme through Grant No. 863476". G.S. acknowledges financial support from the Spanish MINECO grant PID2021-127726NB-I00, the CSIC Research Platform on Quantum Technologies PTI-001, the QUANTUM ENIA project Quantum Spain funded through the RTRP-Next Generation program under the framework of
the Digital Spain 2026 Agenda and partial support from NSF grant PHY-2309135 to the Kavli Institute
for Theoretical Physics (KITP), as well as joint sponsorship from the Fulbright Program and the Spanish
Ministry of Science, Innovation and Universities.

\bibliographystyle{alpha}
\bibliography{bibliography}

\end{document}